\newtheorem{lem}{Lemma}
\newtheorem{thm}{Theorem}
\newtheorem{defin}{Definition}
\newcommand{\msf}{\mathsf}
\newcommand{\mrm}{\mathrm}
\newcommand{\mbb}{\mathbb}
\newcommand{\mcal}{\mathcal}
\newcommand{\bra}[1]{\langle #1|}
\newcommand{\ket}[1]{|#1 \rangle}
\newcommand{\ketbra}[1]{\ket{#1}\bra{#1}}
\newcommand{\ident}{\mathbb{I}}
\DeclareMathOperator{\tr}{Tr}
\newcommand{\mdag}{^{\dag}} 
\newcommand{\mmixed}{\Omega^{A\p}\mspace{-3mu}\otimes \rho^{E}}
\newcommand{\tra}{\tr_A}
\newcommand{\mpr}{{\mrm{Pr}}}
\newcommand{\mtr}[1]{\mathrm{Tr}\left(#1\right)}
\newcommand{\mtra}[1]{\mathrm{Tr}_{A}\left(#1\right)}
\newcommand{\p}{^{\prime}}
\newcommand{\mbI}{\mathbb{I}}
\newcommand{\mE}[1]{\mcal{E}(#1)}
\newcommand{\XV}{\frac{X^uZ^v}{\sqrt{d_A}}}
\newcommand{\XVd}{\frac{Z^vX^u}{\sqrt{d_A}}}
\newcommand{\ts}{\tilde{\rho}^{AE}}
\newcommand{\red}{ {\sigma^E}^{-\tfrac{1}{2}} }
\begin{document}
\title{Quantum entropic security and approximate quantum encryption}
\author{Simon Pierre Desrosiers and Frédéric Dupuis
\thanks{F. Dupuis is with the Université de Montréal and McGill University. email: {\tt dupuisf@iro.umontreal.ca}}
\thanks{S. P. Desrosiers is with McGill University, email: {\tt simonpie@cs.mcgill.ca}}}

\maketitle

\IEEEpeerreviewmaketitle

\begin{abstract}
An encryption scheme is said to be entropically secure if an adversary whose min-entropy on the message is upper bounded cannot guess any function of the message. Similarly, an encryption scheme is entropically indistinguishable if the encrypted version of a message whose min-entropy is high enough is statistically indistinguishable from a fixed distribution. We present full generalizations of these two concepts to the encryption of quantum states in which the quantum conditional min-entropy, as introduced by Renner, is used to bound the adversary's prior information on the message. A proof of the equivalence between quantum entropic security and quantum entropic indistinguishability is presented. We also provide proofs of security for two different ciphers in this model and a proof for a lower bound on the key length required by any such cipher. These ciphers generalize existing schemes for approximate quantum encryption to the entropic security model.
\end{abstract}

\begin{keywords}
quantum information, cryptography, entropic security
\end{keywords}

\section{Introduction}

\PARstart{S}{emantic} security, whether it is computational, as introduced in \cite{GM84}, information theoretic in a classical setting, as introduced in \cite{RW2002} and \cite{DS2004}, or information theoretic in a limited quantum setting, as introduced in \cite{SPD2007}, contrasts the capabilities of two adversaries: one ($\msf{A}$) that has access to an encrypted version of the message, and another ($\msf{A}\p$) that does not.  Their abilities to predict a function on the initial message are compared.  Of course $\msf{A}\p$ seems to be at a tremendous disadvantage:  it has access to nothing but the prior distribution of the plain text, whereas $\msf{A}$ also has access to an encrypted version of the plain text and could potentially use imperfections in the encryption scheme to gain an advantage.  However, this can become a way to bound these imperfections: an encryption scheme is considered semantically secure if, for every adversary $\msf{A}$, there exists an $\msf{A}'$ that can predict every function on the plaintext almost as well as $\msf{A}$ without even having access to the encrypted message. This is a very strong security criterion, especially in the information theoretic setting.

Perhaps surprisingly, it is possible to construct semantically secure encryption schemes which, depending on their setting, make very few assumptions on $\msf{A}$ and yet do not require keys to be as long as the message. In the computational setting, Goldwasser and Micali \cite{GM84} had as a constraint that both $\msf{A}$ and $\msf{A}\p$ were probabilistic polynomial-time machines.  In their model, they could construct encryption schemes which, on all message distributions, would render $\msf{A}$ as useless as $\msf{A}\p$.  In the information theoretic setting, introduced by Russell and Wang \cite{RW2002}  and expanded upon by Dodis and Smith \cite{DS2004}, no computational limitation is imposed on $\msf{A}$ or $\msf{A}\p$.  In order to achieve significant key size reduction, a limit on the prior knowledge of $\msf{A}$ on the plain text space is assumed.  In fact, a lower bound on the min-entropy of the message space is assumed: the most probable message is not too probable. For this reason, this concept is called \emph{entropic security} in the context of information-theoretic security. In the quantum information theoretic setting, as introduced by Desrosiers \cite{SPD2007}, the exact same restriction on the min-entropy is imposed on $\msf{A}$, except that this time messages are quantum states which are further assumed to be unentangled with any quantum system that the adversary might possess. If these two restrictions are satisfied, one can construct encryption schemes for the quantum setting which have exactly the same key size as in the classical setting: for an $n$-qubit message which is assumed to have a min-entropy of at least $t$, then we need $n-t + \log(1/\varepsilon)$ bits of key to encrypt it securely (where $\varepsilon$ is a security parameter).

In this paper we remove one of those two restrictions.  Of course, the limit on the min-entropy of the adversary on the message space is hard to remove: it is the essence of entropic security.  However, it has to be modified in order to get robust definitions of security in the presence of entanglement between the sender and the adversary.  The notion of quantum conditional min-entropy as introduced by Renner in \cite{Renner2005} will be used to bound the prior ``knowledge'' of the adversary.  This new notion of min-entropy allows us to remove the no-entanglement restriction and replace it by something more general.  Indeed, if a state is not entangled, we have an implicit lower bound of zero on the conditional min-entropy, whereas in the general case, the conditional min-entropy of the adversary on an $n$-qubit system held by the sender ranges between $-n$ and $n$. It turns out that the key size remains the same in this model: for an $n$-qubit message about which the eavesdropper has a min-entropy of at least $t$, we still need a key of $n-t+\log(1/\varepsilon)$ bits. In the extreme case where we have no bound at all on the min-entropy, this reduces to $2n+\log(1/\varepsilon)$, which is in total agreement with the standard result of Ambainis, Mosca, Tapp and de Wolf \cite{amtw00}.  

Note that this generalizes the existing literature on approximate quantum encryption. In \cite{HLSW03}, Hayden, Leung, Shor and Winter considered the task of approximately encrypting quantum states assuming that the adversary is not entangled with the sender. They showed, using a randomized argument, that, while we need $2n$ bits of key to perfectly encrypt an $n$-qubit quantum message, there exists a scheme requiring $n+\log n + 2\log(1/\varepsilon) + O(1)$ bits of key.  Ambainis and Smith \cite{AS2004} then gave two explicit constructions of an approximate quantum encryption scheme under the same assumption requiring $n+2\log n + 2\log(1/\varepsilon)$ and $n+2\log(1/\varepsilon)$ bits of key respectively. Here we recover and generalize these results.

More recently, Fehr and Schaffner \cite{FS07} gave a classical encryption scheme which is entropically secure against an adversary that has access to quantum information about the classical message. Our work also generalizes this result: when our encryption schemes are applied to a classical message, the resulting ciphertext remains classical, and the proof of security still works against quantum adversaries.

We introduce our model and definitions in section \ref{modelanddefinitions} and show in section \ref{equivalence} that the two security definitions we give are equivalent. We also prove, in section \ref{encryptionschemes}, that two encryption schemes introduced by Ambainis and Smith \cite{AS2004} and by Dodis and Smith \cite{DS2004} (and generalized to the quantum world by Desrosiers \cite{SPD2007})  are still secure using this new definition and require the same amount of key as in the limited quantum model of \cite{SPD2007}.  Finally, in section \ref{minimum}, we generalize a proof of Dodis and Smith to show that an entropic scheme that can encrypt any $n$-qubit state having  a conditional min-entropy of at least $t$ requires at least $n-t-1$ bits of uniform key.

\section{Notation and preliminaries}
A quantum state $\rho$ is defined as a positive semidefinite operator of trace equal to 1 over some Hilbert space $\mcal{H}$ . By the spectral decomposition theorem, $\rho=\sum_{i}\gamma_{i}\ketbra{r_{i}}$, where the $\ket{r_{i}}$ form a basis for the space in which the quantum state lives and the $\gamma_{i}$ are non-negative real numbers that sum up to one.  This can be interpreted this way: if $\rho$ is measured in the basis $\{\ket{r_{i}}\}$, then it behaves as a source that will output with probability $\gamma_{i}$ the state $\ket{r_{i}}$.\footnote{For a thorough introduction to quantum information theory, see \cite{NC2000}}


The partial trace can be seen as a kind of inverse to the tensor product operation.  For any bipartite state $\rho^{AB}$, we have that $\rho^{B}=\mtra{\rho^{AB}}$; the normal interpretation for such an operator is that if a physical state $\rho^{AB}$ lives in the space $AB$  but one only has access to the system $B$ to measure the state, then the statistics obtained are in agreement with $\rho^{B}$.  The partial trace can be defined as:
\begin{equation}
\tra{(\rho^{AB})} \triangleq \sum_{i}(\bra{r_{i}}^{A}\otimes\mbI^{B})\rho^{AB}(\ket{r_{i}}^{A}\otimes\mbI^{B})
\end{equation}
where the vectors $\{\ket{r_{i}}\}$ form any orthonormal basis for the subspace $A$.  In fact, this is equivalent to doing a complete measurement of the $A$ subsystem followed by a loss of the result and of the $A$ subsystem; what is left in our hands is $\tra(\rho^{AB})$.

Throughout this paper, we will use superscripts for density matrices to indicate on which subsystems they are defined; for example, $\rho^{AB}$ is a density operator on the Hilbert space $\mathcal{H}_A \otimes \mathcal{H}_B$. By convention, when we omit certain subsystems from the superscript, we mean that we take the partial trace over the subsystems that are absent; i.e. $\rho^B = \tra \rho^{AB}$. We will refer to the dimension of the Hilbert space $\mathcal{H}_A$ by $d_A$.

We will use as our main distance measure the trace distance which is defined as 
\begin{equation} 
\left\| \rho - \sigma \right\|_{1} \triangleq \mtr{\left|\rho - \sigma \right|},
\end{equation}
where $\left| A \right|$ is defined as $\sqrt{A\mdag A}$, which is simply $\sum_{i}\left|\alpha_{i}\right|\ketbra{a_{i}}$ for a Hermitian operator $A=\sum_{i}\alpha_{i}\ketbra{a_{i}}$.  As \cite{helstrom1969} and chapter 9 in \cite{NC2000} tell us, for any two states $\rho$ and $\sigma$ there exists an optimal adversary which can distinguish between them with probability $\frac{1}{2} + \frac{1}{4} \|\rho-\sigma\|_{1}$; no adversary can do better.

Another useful distance measure is known as the \emph{fidelity}: given two density operators $\rho$ and $\sigma$, their fidelity $F(\rho, \sigma)$ is defined as $\left\| \sqrt{\rho} \sqrt{\sigma} \right\|_1$. If $\sigma$ is a pure state $\ketbra{\psi}$, this is equal to $\sqrt{\bra{\psi}\rho\ket{\psi}}$.

We will also frequently make use of operator inequalities: given two Hermitian operators $A$ and $B$, we will say that $A \geqslant B$ iff $A-B$ is positive semidefinite.

Also, we denote by $a\|b$ the concatenation of the bit strings $a$ and $b$. $X^a$, where $a=a_1 \cdots a_n$ is an $n$-bit string, means $X^a = X^{a_1} \otimes X^{a_2} \otimes \cdots \otimes X^{a_n}$.  We shall also write $\mcal{L(H)}$ for the space of linear operators on the Hilbert space $\mcal{H}$.  Finally, we denote by $a\odot b$ the inner product modulo $2$ of the strings $a$ and $b$: $\sum_{i}a_{i}b_{i}\mod{2}$.

\section{Model and definitions}\label{modelanddefinitions}

Entropic security as introduced by Russell and Wang \cite{RW2002} and generalized by Dodis and Smith \cite{DS2004} uses the definition of classical min-entropy to represent the adversary's knowledge on the sender's message space.  Let $M$ be a random variable over the message  space $\msf{M}$ and let $M$ take value $m$ with probability $p_{m}$.  Then the min-entropy of $M$, written $H_{\infty}(M)$ is defined to be $-\log{\max_{m}(p_{m})}$.

Desrosiers introduced in \cite{SPD2007} a quantum version of these security definitions for the case where the eavesdropper and the sender are neither entangled nor correlated. In this setting, a message $\sigma_{i}$ is chosen at random with probability $p_{i}$ in a valid \emph{interpretation} $\{(p_{i},\sigma_{i})\}$ of a state $\rho^{A}=\sum_{i}p_{i}\sigma_{i}$.  Here the adversary's a priori uncertainty is quantified by the quantum min-entropy, $H_{\infty}(\rho^{A})=-\log{\max_{j}\gamma_{j}}$  where $\sum \gamma_{j}\ketbra{j}$ is the spectral decomposition of $\rho^{A}$.  The joint system of the sender and the adversary was considered to contain no correlations: i.e. $\rho^{AE}=\sigma^{A}\otimes \tau^{E}$, where $E$ represents the eavesdropper's system.

In this paper, we shall show that we can fully generalize these security definitions to the quantum setting, where no assumption on the entanglement between the sender and the adversary is made.  The only restriction on the adversary will be quantified by the following definition introduced by Renner (see \cite{Renner2005}) in his proof that the BB84 scheme, the original quantum key distribution protocol, is secure in the most general setting.  We shall make no other assumption on the sender-eavesdropper system than the eavesdropper's conditional min-entropy.

\begin{defin}[Quantum conditional min-entropy]\label{ConMinEntropy}
For any quantum state $\rho^{AE}$ shared between the eavesdropper and the sender, we define the conditional min-entropy of $A$ given $E$ as
\begin{equation*}
H_{\infty}(A|E)_{\rho} = -\log \min_{\sigma^E} \min \left\{ \lambda : \lambda \ident^A \otimes \sigma^E \geqslant \rho^{AE} \right\}
\end{equation*}
where $\sigma^E$ ranges over all normalized density operators over $\mathcal{H}_E$.
\end{defin}

According to \cite{cond-min-ent}, we can express the quantum conditional min-entropy as
\begin{equation*}
2^{-H_{\infty}(A|E)_{\rho}} = d_A \max_{\mathcal{E}}\bra{\Phi} \mathcal{E}(\rho^{AE}) \ket{\Phi}
\end{equation*}
where the maximization is taken over all CPTP maps $\mathcal{E} : \mathcal{L}(\mathcal{H}_E) \rightarrow \mathcal{L}(\mathcal{H}_{A'})$ and where $\mathcal{H}_{A'} \cong \mathcal{H}_{A}$.

One can prove a few properties about conditional min-entropy which will be handy later on. First, this lemma:
\begin{lem}\label{separable}
	Let the joint state of the sender and the adversary be $\rho^{AE}=\rho^{A}\otimes\rho^{E}$, then $H_{\infty}(A|E)_{\rho}=H_{\infty}(A)_{\rho}$.
\end{lem}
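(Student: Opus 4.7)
The plan is to unfold the definition and prove two matching inequalities on the parameter $\lambda$ appearing in Definition \ref{ConMinEntropy}, using only the fact that the partial trace is positive (it preserves the operator order) and that tensor products of positive semidefinite operators are positive semidefinite.

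First, I would establish $H_\infty(A|E)_\rho \geqslant H_\infty(A)_\rho$ by producing an explicit feasible $\sigma^E$. Taking $\sigma^E = \rho^E$ and $\lambda = 2^{-H_\infty(A)_\rho} = \lambda_{\max}(\rho^A)$, the operator $\lambda \mathbb{I}^A - \rho^A$ is positive semidefinite by definition of $\lambda_{\max}$, and $\rho^E \geqslant 0$, so
\begin{equation*}
\lambda \mathbb{I}^A \otimes \rho^E - \rho^A \otimes \rho^E = (\lambda \mathbb{I}^A - \rho^A) \otimes \rho^E \geqslant 0,
\end{equation*}
which witnesses that $\lambda$ is achievable in the minimization, hence $2^{-H_\infty(A|E)_\rho} \leqslant \lambda_{\max}(\rho^A) = 2^{-H_\infty(A)_\rho}$.

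Second, for the reverse inequality $H_\infty(A|E)_\rho \leqslant H_\infty(A)_\rho$, I would take an arbitrary pair $(\sigma^E, \lambda)$ feasible in the definition, i.e.\ $\lambda \mathbb{I}^A \otimes \sigma^E \geqslant \rho^A \otimes \rho^E$, and apply the partial trace over $E$ to both sides. Since the partial trace is a completely positive map it preserves operator inequalities, and using $\tr(\sigma^E) = \tr(\rho^E) = 1$, this yields $\lambda \mathbb{I}^A \geqslant \rho^A$, whence $\lambda \geqslant \lambda_{\max}(\rho^A)$. Taking the minimum over feasible $(\sigma^E,\lambda)$ gives $2^{-H_\infty(A|E)_\rho} \geqslant \lambda_{\max}(\rho^A) = 2^{-H_\infty(A)_\rho}$, which combined with the first step delivers equality.

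There is no real obstacle here beyond correctly handling the two partial traces; the key observations are the tensor-product structure of $\rho^{AE}$ (which makes $\sigma^E = \rho^E$ an optimal choice) and the monotonicity of positivity under the partial trace (which extracts the $A$-only inequality from any feasible witness).
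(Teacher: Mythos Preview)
Your proof is correct and matches the paper's approach: the paper establishes the same two inequalities via a circular chain (choosing $\sigma^E=\rho^E$ for one direction and implicitly using the partial trace for the other), while you separate them explicitly. The mathematical content is identical; your explicit invocation of the positivity of the partial trace is, if anything, slightly clearer than the paper's compressed chain.
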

\begin{proof}
	\begin{align*}
		2^{-H_{\infty}(A|E)_{\rho}} &= \min_{\sigma^E} \min \left\{ \lambda : \lambda \ident^A \otimes \sigma^E \geqslant \rho^A \otimes \rho^E \right\}\\
		&\geqslant \min \left\{ \lambda : \lambda \ident^{A} \geqslant \rho^A \right\}\\
		&= \min \left\{ \lambda : \lambda \ident^{A} \otimes \rho^E \geqslant \rho^A \otimes \rho^E\right\}\\
		&\geqslant \min_{\sigma^E} \min \left\{ \lambda : \lambda \ident^A \otimes \sigma^E \geqslant \rho^A \otimes \rho^E \right\}.
	\end{align*}
	Since the first and last lines are the same, the two inequalities are, in fact, equalities, and hence $2^{-H_{\infty}(A|E)_{\rho}} = \min\left\{ \lambda : \lambda \ident^A \geqslant \rho^A \right\} = 2^{-H_{\infty}(A)_{\rho}}$.
\end{proof}
We can conclude from this lemma that if the sender and the adversary are not correlated, then the earlier results of \cite{SPD2007} can be used.

Furthermore, König, Renner and Schaffner \cite{cond-min-ent} show that for a state of the form $\rho^{AE} = \sum_i p_i \ketbra{i}^A \otimes \rho_i^E$ (i.e. $A$ holds classical information and $E$ holds a quantum state containing partial information on $A$), the quantum conditional min-entropy $H_{\infty}(A|E)_{\rho}$ characterizes Eve's optimal probability of guessing $A$ by measuring $E$:
\begin{equation*}
	p_{\mathrm{guess}} = 2^{-H_{\infty}(A|E)_{\rho}}
\end{equation*}

Note also that if the $A$ and $E$ systems are in a maximally entangled state $\sum_{i=1}^{d}\frac{1}{\sqrt{d}}\ket{i}^{A}\ket{i}^{E}$, where $n=\log d$, then 
\begin{equation} H_{\infty}(A|E)_{\rho} = -n.
\end{equation}
Hence, the quantum conditional min-entropy ranges from $-n$ to $n$ for an $n$-qubits system and, as is the case with the von Neumann conditional entropy, negative values arise from purely quantum effects.


In our model, we will consider a protocol to be secure if the adversary is incapable of obtaining classical information about the message encoded in any basis. We will therefore model the adversary as a POVM on the encrypted message together with the adversary's side information. Since entropic security, even in the classical case (see \cite{DS2004}), does not have good composability properties (i.e. the security of the scheme does not necessarily imply that it can be securely embedded in a larger cryptographic protocol), we will not consider adversaries that keep quantum information without measuring it in the hopes of mounting a more effective attack later after having received more information. We are interested in the predictive capabilities of an adversary that was given $\mcal{E}(\sigma_{i})$ --- see below for the formal definition of a cipher $\mathcal{E}$ --- compared to those of an adversary that was not given such a state in predicting a function of $i$. Since our adversary is a POVM, we take its output to be a prediction of the function $f$.  We shall denote the random variable that is the output of $\msf{A}$ on any given state $\gamma$ by $\msf{A}(\gamma)$; that is, if $\{A_i\}_{i \in I}$ is the set of POVM elements associated with $\msf{A}$, then $\msf{A}(\gamma)$ is a random variable which takes the value $i$ with probability $\tr[A_i \gamma]$.

An encryption scheme $\mcal{E}$ is a set of superoperators $\{\mcal{E}_{k}\}$ indexed by a uniformly distributed key $k \in \{1,\ldots,K\}$ such that for each $k$ there exists an inverting operator $\mcal{D}_{k}$ such that for all $\rho^{AE}$, with probability one we have
\begin{equation}
(\mcal{D}_{k} \otimes \ident)( (\mcal{E}_{k} \otimes \ident)(\rho))=\rho.
\end{equation}
The view of the adversary is then $(\mcal{E} \otimes \ident)(\rho^{AE})\triangleq \frac{1}{K} \sum_{k=1}^K(\mcal{E}_{k} \otimes \ident)(\rho^{AE})$.  To simplify the notation, we will write $\mathcal{E}(\rho^{AE})$ instead of $(\mathcal{E} \otimes \ident)(\rho^{AE})$ from now on.  Note that in general, $\mcal{E}$ maps systems on space $AE$ to systems on space $A\p E$; the dimension of $A\p$ could be larger than the dimension of $A$.

Both \cite{DS2004} and \cite{SPD2007} presented security definitions equivalent in their respective models to the following two security definitions.

Note that throughout this paper, we shall be mostly concerned with encryption schemes where the message to be sent consists of $n$ qubits; therefore $n=\log d_A$ from now on.  

\begin{defin}[Entropic Security]\label{EntropicSecurity}
	An encryption system $\mathcal{E}$ is $(t,\varepsilon)$-entropically secure if for all states $\rho^{AE}$ such that $H_{\infty}(\rho^{AE}|\rho^E) \geqslant t$, all interpretations $\{ (p_i, \sigma_i^{AE})\}$, all adversaries $\msf{A}$ and all functions $f$, there exists an $\msf{A}'$ such that we have: \footnote{One can also get an equivalent definition by using functions on the states $\sigma_i^{AE}$ rather than on the indices $i$.}
\begin{equation}
	\left| \Pr[\msf{A}(\mathcal{E}(\sigma_i^{AE})) = f(i)] - \Pr[\msf{A}'(\sigma_i^E) = f(i)] \right| \leqslant \varepsilon.
	\label{eqn:entropic-security}
\end{equation}
\end{defin}
Note that everywhere, we take probabilities over all $i$ and all randomness used by the adversaries and the cipher.
\begin{defin}[Entropic Indistinguishability]\label{EntropicIndistinguishability}
	An encryption system $\mcal{E}$ is $(t,\varepsilon)$-indistinguishable if there exists a state $\Omega^{A'}$ such that  for all states $\rho^{AE}$ such that $H_{\infty}(A|E)_{\rho} \geqslant t$ we have that:
\begin{equation}\label{indinstinguishability}
\left\|\mcal{E}(\rho^{AE})- \Omega^{A'} \otimes \rho^E \right\|_{1}<\varepsilon.
\end{equation}
\end{defin}

\section{Equivalence between the two security definitions}\label{equivalence}
This section will show that an encryption scheme which is entropically secure is entropically indistinguishable, and vice-versa, up to small variations in the $t$ and $\varepsilon$ parameters. Before presenting these proofs, however, we will need an additional definition and a technical lemma. The following variation on entropic security will prove to be useful in the sequel:

\begin{defin}[Strong entropic security]\label{StrongEntropicSecurity}
	An encryption system $\mathcal{E}$ is strongly $(t,\varepsilon)$-entropically secure if for all states $\rho^{AE}$ such that $H_{\infty}(\rho^{AE}|\rho^E) \geqslant t$, all interpretations $\{ (p_i, \sigma_i^{AE})\}$, all adversaries $\msf{A}$, and all functions $f$, we have
\begin{equation}
	\left| \Pr[\msf{A}(\mathcal{E}(\sigma_i^{AE}))\mspace{-5mu} =\mspace{-5mu} f(i)]\mspace{-4mu} -\mspace{-4mu} \Pr[\msf{A}(\mathcal{E}(\rho^A) \otimes \sigma_i^E)\mspace{-5mu} =\mspace{-5mu} f(i)] \right|\mspace{-3mu} \leqslant\mspace{-3mu} \varepsilon.
	\label{eqn:strong-entropic-security}
\end{equation}
\end{defin}
Note that in this case both uses of $\mcal{E}$ are independent. Strong $(t,\varepsilon)$-entropic security clearly implies regular $(t,\varepsilon)$-entropic security, since $\msf{A}$ used on $\sigma_i^E$ and an encrypted message independent of $\sigma_i^E$ (which can be prepared by Eve in her lab) is a valid choice for $\msf{A}'$. 

The following lemma says that one does not need to consider all possible functions, but one can restrict the analysis to predicates:
\begin{lem}\label{allfunctions}
Let $\rho^{AE}$ be a state, $\{ (p_i,\sigma_i^{AE}) \}$ be an interpretation, $\mcal{E}$ be a cipher, $f$ be a function and $\msf{A}$ be an adversary such that 
\begin{equation*}
	\left| \Pr[\msf{A}(\mathcal{E}(\sigma_i^{AE}))\mspace{-5mu} =\mspace{-5mu} f(i)]\mspace{-4mu} -\mspace{-4mu} \Pr[\msf{A}(\mathcal{E}(\rho^A) \otimes \sigma_i^E)\mspace{-5mu} =\mspace{-5mu} f(i)] \right|\mspace{-3mu} > \mspace{-3mu} \varepsilon.
\end{equation*}
then there exist an adversary $B$ and a predicate $h$ such that
\begin{equation*}
	\left| \Pr[\msf{B}(\mathcal{E}(\sigma_i^{AE}))\mspace{-5mu} =\mspace{-5mu} h(i)]\mspace{-4mu} -\mspace{-4mu} \Pr[\msf{B}(\mathcal{E}(\rho^A) \otimes \sigma_i^E)\mspace{-5mu} =\mspace{-5mu} h(i)] \right|\mspace{-3mu} >\mspace{-3mu} \frac{\varepsilon}{2}.
\end{equation*}\end{lem}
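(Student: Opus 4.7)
The strategy is to reduce the arbitrary-range function $f$ to a binary predicate by randomly subsampling its range, combined with a standard averaging argument. Let $Y$ denote the range of $f$, and sample a subset $S \subseteq Y$ by including each element of $Y$ independently with probability $\tfrac{1}{2}$. From $S$ I would define the predicate $h_S(i) := [f(i) \in S]$ and a new adversary $\msf{B}_S$ that simulates $\msf{A}$ and outputs $1$ precisely when $\msf{A}$'s output lies in $S$. The adversary $\msf{B}_S$ is implemented by the two-outcome POVM obtained by coarse-graining the POVM elements of $\msf{A}$ according to $S$ and its complement, so it is a legitimate adversary in the sense of the model.

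The main probabilistic observation is the following: conditional on any realization $y$ of $\msf{A}$'s output and any value of $i$, the two bits $[y \in S]$ and $[f(i) \in S]$ agree for every $S$ when $y = f(i)$, and are otherwise independent uniform bits (and therefore coincide with probability exactly $\tfrac{1}{2}$ over the choice of $S$). Taking expectations over all sources of randomness, including $S$, and letting $X$ denote either of the two states $\mcal{E}(\sigma_i^{AE})$ or $\mcal{E}(\rho^A) \otimes \sigma_i^E$ appearing in the hypothesis, one then gets
\[ \mbE_S\bigl[\Pr[\msf{B}_S(X) = h_S(i)]\bigr] = \tfrac{1}{2} + \tfrac{1}{2}\Pr[\msf{A}(X) = f(i)], \]
so that the $S$-averaged distinguishing advantage of the pair $(\msf{B}_S, h_S)$ is exactly half of the distinguishing advantage of $(\msf{A}, f)$, which by hypothesis exceeds $\varepsilon$ in absolute value.

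To extract a single deterministic predicate and adversary, I would invoke the triangle inequality $\mbE_S\bigl[|\mathrm{adv}(S)|\bigr] \geqslant \bigl|\mbE_S[\mathrm{adv}(S)]\bigr| > \varepsilon/2$; this guarantees at least one fixed $S^\ast \subseteq Y$ with $|\mathrm{adv}(S^\ast)| > \varepsilon/2$, and the pair $(\msf{B}_{S^\ast}, h_{S^\ast})$ is the desired adversary and predicate. The one point requiring a little care is the sign of the advantage, which is precisely why I work with absolute values and the triangle inequality rather than with signed quantities directly. Apart from this minor issue I do not foresee a substantive obstacle; the entire argument boils down to a single averaging step once the right randomized predicate has been written down, and the argument is insensitive to the specific structure of the two states being distinguished, so it applies equally well to the definition of entropic security as to its strong variant.
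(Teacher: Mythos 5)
Your proof is correct and follows essentially the same route as the paper's: the paper uses the Goldreich--Levin inner-product predicate $h_r(i) = r \odot f(i)$ where you use the random-subset indicator $h_S(i) = [f(i)\in S]$, but both families have the identical two properties (the derived predicate is always right when $\msf{A}$ guesses $f(i)$ correctly, and right with probability exactly $\tfrac{1}{2}$ over the randomization otherwise), so the averaged advantage is exactly halved in both cases. The final extraction of a single good predicate via $\mbE_S\bigl[|\mathrm{adv}(S)|\bigr]\geqslant \bigl|\mbE_S[\mathrm{adv}(S)]\bigr| > \varepsilon/2$ is the same averaging step the paper performs over $r$.
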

\begin{proof}
Let our predicate be a Goldreich-Levin predicate \cite{GL1989}, that is $h_{r}(x)=r\odot f(x)$.  Let $p=\mpr[\msf{A}(\mcal{E}({\sigma}_{i}^{AE}))=f({i})]$ and $q=\mpr[\msf{A}(\mcal{E}(\rho^{A})\otimes{\sigma}_{i}^{E})=f({i})]$.  Then we know that $|p-q|> \varepsilon$.  Let us compute
 \begin{equation}\begin{split}\label{allfunction1}
E = \Big|\mbb{E}_{r}\big[\mrm{Pr}[r\odot \msf{A}(\mcal{E}({\sigma_{i}})) = &h_{r}({i})]\\
 -\mrm{Pr}[r\odot \msf{A}(\mE{\rho^{A}}&\otimes \sigma_{i}^{E})=h_{r}({i}) ]  \big]\Big|,
 \end{split}\end{equation}
where the expectation is taken over all $r$ of adequate size.
We need two observations.  First, when $\msf{A}$ predicts correctly, then $p=\mpr[r\odot \msf{A}(\mcal{E}({\sigma}_{i}^{AE}))=h_{r}({i})]$.  Second, when $\msf{A}$ does not predict correctly, the probability that $r\odot \msf{A}(\mcal{E}(\sigma_{i}))=h_{r}({i})$ is exactly one half.
Hence Equation \eqref{allfunction1} reduces to
\begin{equation}\begin{split}
E&=\left| 1\cdot p+\frac{1}{2}\cdot(1-p)-\left(1\cdot q+\frac{1}{2}\cdot (1-q)\right)\right|\\
&= \left| \frac{p-q}{2}\right| > \frac{\varepsilon}{2}.
\end{split}\end{equation}
Thus there exists at least one value $r$ such that the following is true:
 \begin{equation*}\begin{split}
\left| \Pr[r\odot\msf{A}(\mathcal{E}(\sigma_i^{AE}))\mspace{-5mu} \right.&=\mspace{-5mu} h_{r}(i)]\mspace{-4mu} \\
-&\left.\mspace{-4mu} \Pr[r\odot\msf{A}(\mathcal{E}(\rho^A) \otimes \sigma_i^E)\mspace{-5mu} =\mspace{-5mu} h_{r}(i)] \right|\mspace{-3mu} >\mspace{-3mu} \frac{\varepsilon}{2}.
\end{split}\end{equation*}
The lemma is proven if adversary $\msf{B}(\cdot)$ is defined, using this appropriate $r$, as $r\odot \msf{A}(\cdot)$.
\end{proof}

\begin{thm}\label{theorem1a}
	$(t-1,\varepsilon/2)$-entropic indistinguishability implies strong $(t,\varepsilon)$-entropic security for all functions.
\end{thm}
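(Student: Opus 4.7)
The plan is to reduce to binary predicates via Lemma \ref{allfunctions} and then close the resulting gap using an auxiliary classical register trick together with two applications of entropic indistinguishability. By Lemma \ref{allfunctions}, it suffices to prove the inequality in Definition \ref{StrongEntropicSecurity} with $\varepsilon$ replaced by $\varepsilon/2$ when $f$ is restricted to a predicate $h : I \to \{0,1\}$ described by POVM elements $A_0, A_1$; the factor-of-two loss in that lemma then upgrades the bound to $\varepsilon$ for arbitrary functions.

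For a fixed predicate $h$ and interpretation $\{(p_i, \sigma_i^{AE})\}$, I would introduce a virtual two-dimensional classical register $H$ holding the predicate value and form
\begin{equation*}
\tau^{HAE} = \sum_i p_i \ketbra{h(i)}^H \otimes \sigma_i^{AE},
\end{equation*}
whose $AE$-marginal is $\rho^{AE}$. The pivotal observation is that exposing this binary classical register decreases the conditional min-entropy by at most one bit: since $\sum_{i:h(i)=j} p_i\sigma_i^{AE} \leqslant \rho^{AE}$ for each $j$, any witness $\rho^{AE} \leqslant 2^{-t}\,\ident^A \otimes \sigma^E$ lifts to
\begin{equation*}
\tau^{HAE} \leqslant 2^{-t}\,\ident^A \otimes \ident^H \otimes \sigma^E = 2^{-(t-1)}\,\ident^A \otimes \bigl(\tfrac{1}{2}\ident^H \otimes \sigma^E\bigr),
\end{equation*}
and the parenthesised operator is a normalized state on $HE$. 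Hence $H_\infty(A|HE)_\tau \geqslant t-1$.

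With this in hand I would apply $(t-1,\varepsilon/2)$-indistinguishability twice: once to $\tau^{HAE}$ (treating $HE$ as the adversary's side system) to obtain $\|\mathcal{E}(\tau) - \Omega^{A'} \otimes \tau^{HE}\|_1 < \varepsilon/2$, and once to the marginal $\rho^A$ (which has $H_\infty(\rho^A) \geqslant t \geqslant t-1$) to obtain $\|\mathcal{E}(\rho^A) - \Omega^{A'}\|_1 < \varepsilon/2$. A triangle inequality then gives $\|\mathcal{E}(\tau) - \mathcal{E}(\rho^A) \otimes \tau^{HE}\|_1 < \varepsilon$. The predicate-case advantage can be written as $|\tr[P(\mathcal{E}(\tau) - \mathcal{E}(\rho^A) \otimes \tau^{HE})]|$ with $P = \sum_{j\in\{0,1\}} \ketbra{j}^H \otimes A_j^{A'E}$ satisfying $0 \leqslant P \leqslant \ident$, and since both operators in the difference are states this is at most half the trace distance, i.e.\ $< \varepsilon/2$, closing the predicate case.

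The main subtlety is verifying the one-bit drop in conditional min-entropy when the binary classical register is exposed, since it is precisely this loss that lets the triangle close using the slightly weaker indistinguishability parameter $t-1$ instead of $t$; the remainder is routine triangle-inequality bookkeeping together with the standard factor-of-two relating trace distance and distinguishing advantage.
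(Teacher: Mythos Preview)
Your argument is correct and takes a genuinely different route from the paper's. The paper argues by contrapositive: from an adversary violating strong security for a predicate $h$ it forms the aggregated states $\tau_j^{AE}=r_j^{-1}\sum_{h(i)=j}p_i\sigma_i^{AE}$, builds the mixtures $\tilde\tau_0=r_0\tau_0+r_1\rho^A\otimes\tau_1^E$ and $\tilde\tau_1=r_1\tau_1+r_0\rho^A\otimes\tau_0^E$, shows (via the K\"onig--Renner--Schaffner fidelity characterization of $H_\infty$) that each has conditional min-entropy at least $t-1$, and then argues that the adversary distinguishes $\mathcal{E}(\tilde\tau_0)$ from $\mathcal{E}(\tilde\tau_1)$ with advantage better than $\varepsilon/4$, forcing $\|\mathcal{E}(\tilde\tau_j)-\Omega^{A'}\otimes\rho^E\|_1>\varepsilon/2$ for some $j$. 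Your direct approach instead tags each $\sigma_i$ with a one-bit classical register $H$ carrying $h(i)$, observes from the primal definition alone that this drops $H_\infty(A|\cdot)$ by at most one, applies indistinguishability once to $\tau^{HAE}$ (with side system $HE$) and once to the marginal $\rho^A$, and closes with the triangle inequality plus the standard factor of two between trace distance and distinguishing advantage. Your route is arguably cleaner---the ``one classical bit costs one bit of min-entropy'' step is a reusable fact, and you avoid both the somewhat ad hoc $\tilde\tau_j$ construction and the detour through the dual min-entropy formula---while the paper's version has the minor advantage of producing an explicit witness state living on the original $AE$ space without any auxiliary register.
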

\begin{proof}
	We shall prove the contrapositive.  Suppose there exists an adversary $\msf{B}$, a state $\rho^{AE}$ such that $H_{\infty}(A|E)_{\rho} \geqslant t$, an interpretation $\left\{ (p_j, \sigma_j^{AE}) \right\}$ for $\rho^{AE}$ and a function $f$ such that
\begin{equation}
	\left| \Pr[\msf{B}(\mathcal{E}(\sigma_i^{AE}))\mspace{-5mu} =\mspace{-5mu} f(i)]\mspace{-4mu} -\mspace{-4mu} \Pr[\msf{B}(\mathcal{E}(\rho^A) \otimes \sigma_i^E)\mspace{-5mu} =\mspace{-5mu} f(i)] \right|\mspace{-4mu} >\mspace{-3mu} \varepsilon.
	\label{eqn:entropic-security-violation}
\end{equation}
Then we know from Lemma \ref{allfunctions} that there exists another adversary and a predicate $h$ such that strong $(t,\varepsilon/2)$-entropic security is violated. Let's call this adversary $\msf{A}$ and let us define the sets $E_0$ and $E_1$ as follows:
\begin{eqnarray}
	E_0 &=& \left\{ i | h(i) = 0 \right\}\\
	E_1 &=& \left\{ i | h(i) = 1 \right\}.
	\label{eqn:e0e1}
\end{eqnarray}
Define the following:
\begin{eqnarray*}
	r_0 &=& \sum_{i \in E_0} p_i,\\
	r_1 &=& \sum_{i \in E_1} p_i,\\
	\tau_0^{AE} &=& \frac{1}{r_0} \left( \sum_{i \in E_0} p_i \sigma_i^{AE} \right)\\
\tau_1^{AE} &=& \frac{1}{r_1} \left( \sum_{i \in E_1} p_i \sigma_i^{AE} \right).
\end{eqnarray*}
Note that $\rho^{AE}=r_{0}\tau_{0}^{AE}+r_{1}\tau_{1}^{AE}$.  Now, define the following states:
\begin{eqnarray}
	\tilde{\tau}_0^{AE} &=& r_0 \tau_0^{AE} + r_1 \rho^{A} \otimes \tau_1^E\\
	\tilde{\tau}_1^{AE} &=& r_1 \tau_1^{AE} + r_0 \rho^{A} \otimes \tau_0^E,
	\label{eqn:tautilde}
\end{eqnarray}
where, as usual, $\tau_{i}^{E}=\tra[\tau_{i}^{AE}]$.  We need the following lemma to finish the proof.
\begin{lem}
Assuming $H_{\infty}(A|E)_{\rho} \geqslant t$, we then have that both $H_{\infty}(A|E)_{\tilde{\tau}_0}$ and $H_{\infty}(A|E)_{\tilde{\tau}_1}$ are at least $t-1$.
\end{lem}
\begin{proof}
We have that
\begin{multline}\label{eqn:tau-min-entropy-bound}
	d_A\max_{\mathcal{E}} \bra{\Phi} \mathcal{E}(\tilde{\tau}_0^{AE}) \ket{\Phi}\\
\begin{split}
 &\leqslant r_0 d_A\max_{\mathcal{E}} \bra{\Phi} \mathcal{E}(\tau_0^{AE}) \ket{\Phi} + r_1 d_A\max_{\mathcal{E}} \bra{\Phi} \rho^A \otimes \mathcal{E}(\tau_1^E) \ket{\Phi}\\
 &\leqslant d_A\max_{\mathcal{E}} \bra{\Phi} \mathcal{E}(\rho^{AE}) \ket{\Phi} + d_A\max_{\mathcal{E}} \bra{\Phi} \rho^A \otimes \mathcal{E}(\rho^E) \ket{\Phi}\\
 &\leqslant 2^{-t} + d_A\max_{\mathcal{E}} \bra{\Phi} \rho^A \otimes \mathcal{E}(\rho^E) \ket{\Phi}
\end{split}
\end{multline}
We now bound the second term using the original definition of the conditional min-entropy:
\begin{multline}
\min_{\sigma^E} \min \left\{ \lambda : \lambda \ident^A \otimes \sigma^E \geqslant \rho^A \otimes \rho^E\right\}\\
\begin{split}
&\leqslant \min \left\{ \lambda : \lambda \ident^A \otimes \rho^E \geqslant \rho^A \otimes \rho^E \right\}\\
&= \min \left\{ \lambda : \lambda \ident^A \geqslant \rho^A \right\}\\
&\leqslant \min_{\sigma^E} \min \left\{ \lambda : \lambda \ident^A \otimes \sigma^E \geqslant \rho^{AE}\right\}\\
&\leqslant 2^{-t}
\end{split}
\end{multline}
Substituting this into the last line of (\ref{eqn:tau-min-entropy-bound}) yields $H_{\infty}(A|E)_{\tilde{\tau}_0} \geqslant t-1$. Of course, an identical calculation yields the same result for $\tilde{\tau}_1^{AE}$.
\end{proof}

To finish the proof of Theorem \ref{theorem1a} , we want to show that $\msf{A}$ can distinguish $\mathcal{E}(\tilde{\tau}_0^{AE})$ from $\mathcal{E}(\tilde{\tau}_1^{AE})$ with probability strictly better than $1/2 + \varepsilon/4$. Let's denote by $\eta$ the probability that $\msf{A}$ will correctly distinguish $\mathcal{E}(\tau_0^{AE})$ from $\mathcal{E}(\tau_1^{AE})$ in an $r_0, r_1$ mixture, and by $\alpha$ the probability that $\msf{A}$ will correctly distinguish $\mathcal{E}(\rho^A) \otimes \tau_0^E$ from $\mathcal{E}(\rho^A) \otimes \tau_1^E$ in an $r_0$, $r_1$ mixture. Also assume without loss of generality that $\eta > \alpha$ (otherwise consider an adversary identical to $\msf{A}$ but which returns the opposite answer). Now assume that we feed it $\mathcal{E}(\tilde{\tau}_0^{AE})$ with probability $1/2$ and $\mathcal{E}(\tilde{\tau}_1^{AE})$ with probability $1/2$. Observe that this is exactly as if we gave it an $r_0,r_1$ mixture of $\mathcal{E}(\tau_0^{AE})$ and $\mathcal{E}(\tau_1^{AE})$ with probability $1/2$ and an $r_0,r_1$ mixture of $\mathcal{E}(\rho^A) \otimes \tau_0^E$ and $\mathcal{E}(\rho^A) \otimes \tau_1^E$ with probability $1/2$. We then have that the probability of distinguishing $\mathcal{E}(\tilde{\tau}_0^{AE})$ from $\mathcal{E}(\tilde{\tau}_1^{AE})$ using $\msf{A}$ is
\begin{equation*}
	\frac{1}{2} \eta + \frac{1}{2}(1-\alpha) = \frac{1}{2} + \frac{1}{2}(\eta - \alpha)
\end{equation*}
since the correct answer is reversed for $\mathcal{E}(\rho^A) \otimes \tau_0^E$ and $\mathcal{E}(\rho^A) \otimes \tau_1^E$.

But by the assumption that $\msf{A}$ violates entropic security, we know that
\begin{equation*}\begin{split}
	\eta - \alpha &= \Pr[\msf{A}(\mathcal{E}(\tau_i^{AE}))\mspace{-2mu} =\mspace{-2mu} i] - \Pr \left[\msf{A}\left(\mathcal{E}(\rho^A) \otimes \tau_i^E\right)\mspace{-2mu} =\mspace{-2mu} i\right]\\ &> \varepsilon/2.
\end{split}\end{equation*}

Hence, the probability of distinguishing $\mathcal{E}(\tilde{\tau}_0^{AE})$ from $\mathcal{E}(\tilde{\tau}_1^{AE})$ is at least $1/2 + \varepsilon/4$, which implies that for all ${\Omega^{A}}\p$ we have:
\begin{align*}
	\varepsilon &< \left\| \mathcal{E}(\tilde{\tau}_0^{AE}) - \mathcal{E}(\tilde{\tau}_1^{AE}) \right\|_1 \\
	&= \left\| \left(\mathcal{E}(\tilde{\tau}_0^{AE})\mspace{-2mu} - \mspace{-3mu} \mmixed\right) \mspace{-3mu} - \mspace{-3mu} \left(\mathcal{E}(\tilde{\tau}_1^{AE}) \mspace{-2mu} - \mspace{-3mu} \mmixed \right) \right\|_1 \\
	&\leqslant \left\| \mathcal{E}(\tilde{\tau}_0^{AE}) - \mmixed \right\|_1 + \left\| \mathcal{E}(\tilde{\tau}_1^{AE}) - \mmixed \right\|_1 \\
\end{align*}
and therefore either $\left\| \mathcal{E}(\tilde{\tau}_0^{AE}) - \mmixed \right\|_1 > \varepsilon/2$ or $\left\| \mathcal{E}(\tilde{\tau}_1^{AE}) - \mmixed \right\|_1 > \varepsilon/2$, which is a violation of  $(t-1,\varepsilon/2)$-indistinguishability.

\end{proof}

\begin{thm}
	$(t,\varepsilon)$-entropic security implies $(t-1,6\varepsilon)$-indistinguishability as long as $t \leqslant n - 1$.
\end{thm}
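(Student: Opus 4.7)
The plan is to argue the contrapositive: assuming $(t-1,6\varepsilon)$-indistinguishability fails, I will exhibit an interpretation, an adversary $\msf{A}$, and a function $f$ that jointly violate $(t,\varepsilon)$-entropic security. I would pick the candidate $\Omega^{A'}:=\mathcal{E}(\ident^A/d_A)$; the negation of $(t-1,6\varepsilon)$-indistinguishability then yields some $\rho^{AE}$ with $H_{\infty}(A|E)_{\rho}\geqslant t-1$ and $\|\mathcal{E}(\rho^{AE})-\Omega^{A'}\otimes\rho^E\|_1\geqslant 6\varepsilon$. Helstrom's theorem (Section II) produces a POVM element $0\leqslant P\leqslant \ident$ with $\tr[P(\mathcal{E}(\rho^{AE})-\Omega^{A'}\otimes\rho^E)]\geqslant 3\varepsilon$, which will form the core of the adversary's measurement.

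The key construction will be a three-outcome interpretation with uniform weights $\{(1/3,\sigma_i^{AE})\}_{i=0,1,2}$, where $\sigma_0^{AE}:=\rho^{AE}$ and $\sigma_1^{AE}=\sigma_2^{AE}:=(\ident^A/d_A)\otimes\rho^E$. Two features of this choice matter. First, all three $E$-marginals equal $\rho^E$, so picking $f$ to be the identity permutation on $\{0,1,2\}$ forces $\Pr[\msf{A}'(\sigma_i^E)=f(i)]=1/3$ for every $\msf{A}'$. Second, $\mathcal{E}(\sigma_1^{AE})=\mathcal{E}(\sigma_2^{AE})=\Omega^{A'}\otimes\rho^E$, so the Helstrom POVM $P$ acts as a discriminator of the first component against the other two.

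The main technical step, and the place where the hypothesis $t\leqslant n-1$ is needed, will be verifying that the overall state $\tilde\rho^{AE}:=\tfrac{1}{3}\rho^{AE}+\tfrac{2}{3}(\ident^A/d_A)\otimes\rho^E$ still satisfies $H_{\infty}(A|E)_{\tilde\rho}\geqslant t$, since this is what the $(t,\varepsilon)$-entropic security hypothesis applies to. Using the identity $2^{-H_{\infty}(A|E)}=d_A\max_{\mathcal{F}}\bra{\Phi}\mathcal{F}(\cdot)\ket{\Phi}$ from \cite{cond-min-ent} and pushing the maximum through the affine combination will give
\begin{equation*}
2^{-H_{\infty}(A|E)_{\tilde\rho}}\leqslant \tfrac{1}{3}\cdot 2^{-(t-1)}+\tfrac{2}{3}\cdot 2^{-n} = \tfrac{2}{3}\cdot 2^{-t}+\tfrac{2}{3}\cdot 2^{-n},
\end{equation*}
and under $t\leqslant n-1$ the second term is bounded by $\tfrac{2}{3}\cdot 2^{-t-1}$, yielding the bound $2^{-t}$.

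To finish I take $\msf{A}$ to be the three-outcome POVM $\{P,(\ident-P)/2,(\ident-P)/2\}$ with $f$ as above; a direct calculation gives $\Pr[\msf{A}(\mathcal{E}(\sigma_i^{AE}))=f(i)] = \tfrac{1}{3}+\tfrac{1}{3}\tr[P(\mathcal{E}(\rho^{AE})-\Omega^{A'}\otimes\rho^E)]\geqslant \tfrac{1}{3}+\varepsilon$, producing an advantage of at least $\varepsilon$ over $\msf{A}'$ and so contradicting $(t,\varepsilon)$-entropic security. The hard part of the argument is the three-way balancing act behind the constant $6$: boosting the min-entropy from $t-1$ to $t$ forces the weight of the ``bad'' component to be at most $1/3$, which dilutes the Helstrom distinguishing gap by exactly that factor, and this in turn requires the interpretation to be three-outcome rather than two-outcome with skewed priors, so that the trivial baseline achievable by $\msf{A}'$ is $1/3$ rather than $2/3$ and a surviving gap of $\varepsilon$ still suffices to break entropic security.
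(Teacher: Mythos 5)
Your proposal is correct and follows essentially the same route as the paper's proof: the same choice $\Omega^{A'}=\mathcal{E}(\ident/d_A)$, the same three-component uniform interpretation of $\widehat{\rho}^{AE}=\tfrac{1}{3}\rho^{AE}+\tfrac{2}{3}(\ident/d_A)\otimes\rho^E$, the same use of $t\leqslant n-1$ to restore the min-entropy to $t$, and the same $\tfrac{1}{3}$-versus-$\tfrac{1}{3}+\varepsilon$ comparison (your explicit POVM $\{P,(\ident-P)/2,(\ident-P)/2\}$ is just the paper's ``distinguish, then answer $1$ or a random element of $\{2,3\}$'' adversary written out). No gaps.
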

\begin{proof}
	We will prove the contrapositive. Let $\mcal{E}(\mbI/d_{A})=\Omega^{A\p}$  and let $\rho^{AE}$ be a state such that $H_{\infty}(A|E)_{\rho} \geqslant t-1$ and $\left\| \mathcal{E}(\rho^{AE}) - \Omega^{A\p}\otimes \rho^{E} \right\|_1 > 6\varepsilon$.  Consider the following state
\begin{equation*}
	\widehat{\rho}^{AE} = \frac{1}{3} \rho^{AE} + \frac{2}{3} \frac{\mbI}{d_{A}}\otimes \rho^{E}.
\end{equation*}.

We show that $H_{\infty}(A|E)_{\widehat{\rho}} \geqslant t$:
\begin{multline*}
d_A\max_{\mathcal{E}} \bra{\Phi} \mathcal{E}(\widehat{\rho}^{AE}) \ket{\Phi}\\
\begin{split}
&\leqslant \frac{1}{3} d_A\max_{\mathcal{E}} \bra{\Phi} \mathcal{E}(\rho^{AE}) \ket{\Phi} + \frac{2}{3} d_A\max_{\mathcal{E}} \bra{\Phi} \left( \frac{\ident^A}{2^n} \otimes \mathcal{E}(\rho^E) \right) \ket{\Phi}\\
&\leqslant \frac{1}{3} 2^{-(t-1)} + \frac{2}{3} d_A\bra{\Phi} \left( \frac{\ident^{AE}}{2^n} \right) \ket{\Phi}\\
&\leqslant \frac{1}{3} 2^{-(t-1)} + \frac{2}{3} \cdot \frac{1}{2^n}\\
&= \frac{2}{3}\left(2^{-t} +\frac{1}{2^{n}}\right)\\
&\leqslant \frac{2}{3}\left(2^{-t} + \frac{2^{-t}}{2}\right)\\
&= 2^{-t}.
\end{split}
\end{multline*}

Since $\left\| \mathcal{E}(\rho^{AE}) - \mmixed \right\|_1 > 6\varepsilon$, we know that there exists an adversary that can distinguish $\mathcal{E}(\rho^{AE})$ from ${\Omega^{A}}\p \otimes \rho^E$ with probability at least $\frac{1}{2} + \frac{3}{2}\varepsilon$. Let's call this adversary $\msf{A}$, and let's assume that it gives the right answer with probability $\eta_1$ when it is given $\mathcal{E}(\rho^{AE})$ and with probability $\eta_2$ when it is given ${\Omega^{A}}\p \otimes \rho^E$. We then have $\frac{1}{2}(\eta_1 + \eta_2) > \frac{1}{2} + \frac{3}{2}\varepsilon$.

Now, consider the following interpretation of $\widehat{\rho}^{AE}$: 
\begin{equation}
	\widehat{\rho}^{AE} = \frac{1}{3} \sigma_1^{AE} + \frac{1}{3} \sigma_2^{AE}  + \frac{1}{3} \sigma_3^{AE}
	\label{eqn:interpretation}
\end{equation}
where $\sigma_1^{AE} = \rho^{AE}$ and $\sigma_2^{AE} = \sigma_3^{AE} = \frac{\ident}{d_A} \otimes \rho^E$. We shall show that $\msf{A}$ violates entropic security for $\widehat{\rho}^{AE}$, with this interpretation and the function $h(i) = i$.

First of all, it is clear that by having access only to Eve's system, no adversary can guess the value of $h$ with a probability greater than $1/3$. Let us now determine what $\msf{A}$ can do by having access to the encrypted version of $\widehat{\rho}^{AE}$. One possible strategy for $\msf{A}$ is to try to distinguish between $\mathcal{E}(\rho^{AE})$ and $\mmixed$ and return 1 when it gets $\mathcal{E}(\rho^{AE})$ and randomly return either 2 or 3 when it gets $\mmixed$. We then have:
\begin{eqnarray*}
	\Pr[\msf{A}(\mathcal{E}(\sigma_i^{AE})) = h(i)] &=& \frac{1}{3} \eta_1 + \frac{2}{3} \frac{\eta_2}{2}\\
	&=& \frac{1}{3}(\eta_1 + \eta_2)\\
	&>& \frac{1}{3}(1 + 3\varepsilon)\\
	&=& \frac{1}{3} + \varepsilon.
\end{eqnarray*}

Finally we get that for all adversaries $\msf{A}'$,
\begin{equation*}
\left| \Pr[\msf{A}(\mathcal{E}(\sigma_i^{AE}))\mspace{-5mu} =\mspace{-5mu} h(i)]\mspace{-4mu} -\mspace{-4mu} \underbrace{\Pr[\msf{A'}(\sigma_i^E)\mspace{-5mu} =\mspace{-5mu} h(i)]}_{=\frac{1}{3}} \right|\mspace{-3mu} >\mspace{-3mu} \varepsilon
\end{equation*}
a violation of entropic security.
\end{proof}

\section{Two encryption schemes}\label{encryptionschemes}
Before presenting the ciphers, we will give some definitions and technical lemmas which will be used in the presentation of both encryption schemes.

First, we define the following shortcut for any matrix $\sigma^{AE}$:
\begin{equation}
M_{uv}^{\sigma} := \tr_A\left[ \left(\XVd \otimes \ident\right) \sigma^{AE} \right].
\label{eqn:def-muv}
\end{equation}
We also define
\begin{equation}
\ts := \rho^{AE} - \frac{\ident}{d_A} \otimes \rho^E.
\label{def-ts}
\end{equation}
for any state $\rho^{AE}$, where $\sigma^E$ is a state such that $\rho^{AE} \leqslant 2^{-H_{\infty}(A|E)_{\rho}} \ident^A \otimes \sigma^E$.

\begin{lem} \label{lem:decomposition-sigma}
	For every density matrix $\sigma^{AE}$, we have that $\sigma^{AE} = \sum_{uv} \XV \otimes M_{uv}^{\sigma}$.
\end{lem}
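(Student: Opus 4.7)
The plan is to recognize that this is a standard Pauli-basis expansion lemma: the operators $\{X^u Z^v / \sqrt{d_A}\}_{u,v \in \{0,1\}^n}$ form a Hilbert-Schmidt orthonormal basis of $\mcal{L}(\mcal{H}_A)$, and the statement is just the expansion of a bipartite operator in this basis with operator-valued coefficients on $E$.

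First I would establish the orthonormality relation
\[
\tr\!\left[\left(\XVd\right)\left(\XV[u'][v']\right)\right] = \delta_{u,u'}\,\delta_{v,v'},
\]
where by $\XV[u'][v']$ I mean $X^{u'}Z^{v'}/\sqrt{d_A}$. The calculation uses two facts about the Pauli-string operators: the anticommutation rule $Z^v X^u = (-1)^{u\odot v} X^u Z^v$ (proved by reducing to the single-qubit identity $ZX = -XZ$), and the fact that $\tr[X^a Z^b] = d_A \,\delta_{a,0}\delta_{b,0}$ (every nonzero Pauli tensor is traceless). Combining these, $\tr[Z^v X^u X^{u'} Z^{v'}] = (-1)^{v\odot(u+u')} \tr[X^{u+u'} Z^{v+v'}] = d_A\,\delta_{u,u'}\delta_{v,v'}$, from which orthonormality follows after dividing by $d_A$.

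Since $\{X^u Z^v/\sqrt{d_A}\}$ is an orthonormal basis of the $d_A^2$-dimensional space $\mcal{L}(\mcal{H}_A)$, every operator $\sigma^{AE} \in \mcal{L}(\mcal{H}_A \otimes \mcal{H}_E)$ admits a unique expansion
\[
\sigma^{AE} = \sum_{u,v} \XV \otimes N_{uv}
\]
for some operators $N_{uv}$ on $\mcal{H}_E$ (formally, pick any orthonormal basis of $\mcal{L}(\mcal{H}_E)$ and expand in the product basis on $A\otimes E$; regroup).

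The remaining step is to identify the coefficients $N_{uv}$. Applying $\tr_A\bigl[\bigl(\XVd \otimes \ident\bigr)(\,\cdot\,)\bigr]$ to both sides of the expansion and using the orthonormality relation established above yields
\[
M_{uv}^{\sigma} = \tr_A\!\left[\left(\XVd \otimes \ident\right)\sigma^{AE}\right] = \sum_{u',v'} \delta_{u,u'}\delta_{v,v'}\, N_{u'v'} = N_{uv},
\]
which is exactly the claim. There is no real obstacle here; the only point requiring a little care is bookkeeping the sign $(-1)^{v\odot(u+u')}$ from the anticommutation, but it becomes harmless once the $\delta$'s collapse $u'$ to $u$.
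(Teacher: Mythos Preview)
Your proof is correct and follows essentially the same approach as the paper: expand $\sigma^{AE}$ in the orthonormal product basis $\{X^uZ^v/\sqrt{d_A}\otimes E_j\}$ and regroup the $E$-side coefficients into $M_{uv}^{\sigma}$. The paper does this regrouping directly inside the trace formula, while you first write abstract coefficients $N_{uv}$ and then identify them with $M_{uv}^{\sigma}$ via the partial trace, but this is only a cosmetic difference.
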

\begin{proof}
	Let $\{E_j\}$ be an orthonormal basis for $\mathcal{L}(\mathcal{H}_E)$. Since Pauli matrices form an orthonormal basis for $\mathcal{L}(\mathcal{H}_A)$, we have
	\begin{align}
		\sigma^{AE} &= \sum_{uvj} \XV \otimes E_j \tr\left[ \left(\XVd \otimes E_j\mdag\right) \sigma^{AE} \right]\\
		&= \sum_{uvj} \XV \otimes E_j \tr\left[ E_j\mdag M_{uv}^{\sigma} \right]\\
		&= \sum_{uv} \XV \otimes \left\{ \sum_j E_j \tr\left[ E_j\mdag M_{uv}^{\sigma} \right] \right\}\\
		&= \sum_{uv} \XV \otimes M_{uv}^{\sigma}.
	\end{align}
\end{proof}

We will also make use of the following lemma (Lemma 5.1.3 in \cite{Renner2005}):
\begin{lem}\label{sauveteur}
Let S be a Hermitian operator and let $\sigma$ be any positive definite operator. Then
\begin{equation*}
\|S\|_{1}\leqslant \sqrt{\mtr{\sigma}\mtr{S\sigma^{-1/2}S\sigma^{-1/2}}}.
\end{equation*}
\end{lem}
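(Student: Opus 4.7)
The plan is to strip off factors of $\sigma^{1/4}$ from $S$ and then apply Cauchy-Schwarz twice. Concretely, since $\sigma$ is strictly positive the operator $T := \sigma^{-1/4} S \sigma^{-1/4}$ is well defined and Hermitian, and $S = \sigma^{1/4} T \sigma^{1/4}$. A short cyclic computation shows that $\|T\|_2^2 = \mtr{T^2} = \mtr{S \sigma^{-1/2} S \sigma^{-1/2}}$, so the claim reduces to showing $\|S\|_1 \leqslant \sqrt{\mtr{\sigma}}\cdot \|T\|_2$.

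First I would express $\|S\|_1$ as a Hilbert--Schmidt pairing. Using the polar decomposition $S = U|S|$ with $U$ extended to a unitary on the whole space, one has $\|S\|_1 = \mtr{U^\dag S} = \mtr{(\sigma^{1/4} U^\dag \sigma^{1/4})\, T}$, and Cauchy--Schwarz for the Hilbert--Schmidt inner product gives
\[
\|S\|_1 \leqslant \|\sigma^{1/4} U^\dag \sigma^{1/4}\|_2 \cdot \|T\|_2 .
\]

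Second, I would bound the first factor by $\sqrt{\mtr{\sigma}}$. Expanding the Hilbert--Schmidt norm, $\|\sigma^{1/4} U^\dag \sigma^{1/4}\|_2^2 = \mtr{\sigma^{1/2} U \sigma^{1/2} U^\dag}$, and a second use of Cauchy--Schwarz with $\sigma^{1/2}$ in one slot and $U \sigma^{1/2} U^\dag$ in the other bounds this by $\sqrt{\mtr{\sigma}\cdot \mtr{U\sigma U^\dag}} = \mtr{\sigma}$, using $U^\dag U = I$. Substituting into the previous display yields the lemma.

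The only delicate point is choosing the correct grouping in the second Cauchy--Schwarz so that both Hilbert--Schmidt norms collapse to $\mtr{\sigma}$ after cyclicity and unitarity are invoked; pairing $\sigma^{1/2}$ against itself rather than against $U \sigma^{1/2} U^\dag$ would give the weaker $\sqrt{\mtr{\sigma}\cdot\mtr{S\sigma^{-1}S}}$. A more compact alternative would appeal directly to the three-term Hölder inequality $\|ABC\|_1 \leqslant \|A\|_4 \|B\|_2 \|C\|_4$ applied to $S = \sigma^{1/4} T \sigma^{1/4}$, using $\|\sigma^{1/4}\|_4 = (\mtr{\sigma})^{1/4}$, but that shifts the burden onto a less elementary inequality.
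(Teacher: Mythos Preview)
Your argument is correct. Note, however, that the paper does not actually supply a proof of this lemma: it simply quotes it as Lemma~5.1.3 of Renner's thesis \cite{Renner2005} and uses it as a black box. Your proof is essentially the standard one given there---introduce $T=\sigma^{-1/4}S\sigma^{-1/4}$, write $\|S\|_1=\mtr{U^\dag S}$ via the polar decomposition, and apply Cauchy--Schwarz twice---so there is nothing to contrast.

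One small cosmetic remark: since $S$ is Hermitian, the polar unitary $U$ can already be chosen Hermitian and unitary on the whole space (take $U=\sum_i \mathrm{sgn}(\lambda_i)\ketbra{e_i}$ in the eigenbasis), so the phrase ``with $U$ extended to a unitary'' is unnecessary here. Your aside about the three-term H\"older inequality is also correct and is in fact a cleaner one-line route, but as you note it trades elementarity for brevity.
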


\subsection{A scheme based on $\delta$-biased sets}

In \cite{AS2004}, Ambainis and Smith introduced an approximate quantum encryption scheme based on $\delta$-biased sets. Here, we shall show that if $H_{\infty}(A|E)_{\rho} \geqslant t$, then the Ambainis-Smith scheme is $(t,\varepsilon)$-secure using $n - t + 2\log n + 2\log(\frac{1}{\varepsilon})$ bits of key, where $n$ is the logarithm of $d_{A}$ as usual.

\begin{defin}[$\delta$-biased set]\label{def:DeltaBiased}
	A set $S \subseteq \{0,1\}^n$ is said to be $\delta$-biased if and only if for every $s' \in \{0,1\}^n, s' \neq 0^n$, we have that $\left|\frac{1}{|S|}\sum_{s \in S} (-1)^{s \odot s'}\right| \leqslant \delta$.
\end{defin}

There exist several efficient constructions of $\delta$-biased sets (\cite{NN93,ABNNR92,AGHP92}); following \cite{DS2004}, we will use the one from \cite{AGHP92}, which yields sets of size $n^2/\delta^2$ (note that Dickinson and Nayak \cite{nayak-dickinson} improve this to $\leqslant 16/\delta^2$).

The Ambainis-Smith scheme consists of applying an operator at random from the set
\begin{equation*}
	\left\{ X^a Z^b : a\| b \in S \mbox{ and }|a|=|b|=n\right\}	
\end{equation*}
where $S$ is a $\delta$-biased set containing strings of length $2n$. The shared private key is used to index one of the operators. In other words, the encryption operator is
\begin{equation*}
	\mathcal{E}(\rho^{AE}) = \frac{1}{|S|}\sum_{a\|b \in S} (X^a Z^b \otimes \ident) \rho^{AE} (Z^b X^a \otimes \ident)
\end{equation*}

We shall now prove that this scheme is secure in our framework. The following lemma contains most of the proof, and the main theorem follows:

\begin{lem}\label{as-final-lemma}
For any state $\rho^{AE}$ with $H_{\infty}(A|E)_{\rho} \geqslant t$, we have that
\begin{equation}
	\left\| \mathcal{E}(\rho^{AE}) - \frac{\ident}{d_A} \otimes \rho^E \right\|_1 \leqslant \delta \sqrt{d_A 2^{-t}}
\end{equation}
\end{lem}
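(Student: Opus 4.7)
The plan is to expand the encrypted state in the Pauli basis, exploit the $\delta$-biased property to bound each non-identity Pauli coefficient by $\delta$, then apply Lemma \ref{sauveteur} with $\sigma = \ident^A \otimes \sigma^E$ and reduce the Hilbert-Schmidt term to a quantity directly controlled by the min-entropy inequality $\rho^{AE} \leqslant 2^{-t}\,\ident^A \otimes \sigma^E$.

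First, using Lemma \ref{lem:decomposition-sigma} together with the (anti)commutation rule $Z^b X^u = (-1)^{b \cdot u} X^u Z^b$, I would verify that $(X^a Z^b)(X^u Z^v)(Z^b X^a) = (-1)^{b \cdot u + a \cdot v} X^u Z^v$, so that
\[
\mcal{E}(\rho^{AE}) = \sum_{uv} \gamma_{uv}\, \XV \otimes M_{uv}^{\rho}, \qquad \gamma_{uv} := \frac{1}{|S|}\sum_{a\|b \in S} (-1)^{b \cdot u + a \cdot v}.
\]
The $\delta$-biased property (applied to the nonzero $2n$-bit string $v\|u$) forces $|\gamma_{uv}| \leqslant \delta$ whenever $(u,v) \neq (0,0)$, while $\gamma_{00}=1$ contributes exactly $\frac{\ident}{d_A}\otimes \rho^E$. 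Hence $S := \mcal{E}(\rho^{AE}) - \frac{\ident}{d_A}\otimes \rho^E$ has the clean Pauli expansion $\sum_{(u,v)\neq (0,0)} \gamma_{uv} \XV \otimes M_{uv}^{\rho}$ with every coefficient bounded by $\delta$.

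Next, I fix a $\sigma^E$ witnessing $\rho^{AE} \leqslant 2^{-t}\,\ident^A \otimes \sigma^E$ (positive definite, possibly after a vanishing perturbation and a limit argument) and apply Lemma \ref{sauveteur} to $S$ with $\sigma := \ident^A \otimes \sigma^E$, contributing the factor $\sqrt{\tr(\sigma)} = \sqrt{d_A}$ out front. Setting $N_{uv} := (\sigma^E)^{-1/4} M_{uv}^{\rho} (\sigma^E)^{-1/4}$ and $\widetilde S := (\ident \otimes (\sigma^E)^{-1/4})\, S\, (\ident \otimes (\sigma^E)^{-1/4})$, cyclicity of the trace gives $\tr(S\sigma^{-1/2} S \sigma^{-1/2}) = \tr(\widetilde S^2)$. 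The Pauli orthogonality identity $\tr[X^u Z^v X^{u'} Z^{v'}] = (-1)^{v \cdot u} d_A \, \delta_{uu'} \delta_{vv'}$, combined with the Hermiticity constraint $N_{uv} = (-1)^{v\cdot u} N_{uv}^{\dagger}$ forced by $\widetilde S = \widetilde S^{\dagger}$, causes the two phases to cancel and yields
\[
\tr(\widetilde S^2) = \sum_{(u,v) \neq (0,0)} \gamma_{uv}^2\, \|N_{uv}\|_2^2 \;\leqslant\; \delta^2 \sum_{uv} \|N_{uv}\|_2^2.
\]
The final step is to notice that the same orthogonality calculation applied to the full state gives $\sum_{uv} \|N_{uv}\|_2^2 = \tr(T^2)$ with $T := (\ident \otimes (\sigma^E)^{-1/4})\, \rho^{AE}\, (\ident \otimes (\sigma^E)^{-1/4})$. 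Conjugating the min-entropy inequality by $\ident \otimes (\sigma^E)^{-1/2}$ gives $(\ident\otimes (\sigma^E)^{-1/2})\rho^{AE}(\ident\otimes (\sigma^E)^{-1/2}) \leqslant 2^{-t}\,\ident^{AE}$; pairing this with $\rho^{AE} \geqslant 0$ under the trace produces $\tr(T^2) \leqslant 2^{-t}\, \tr(\rho^{AE}) = 2^{-t}$. Chaining the inequalities together gives $\|S\|_1 \leqslant \sqrt{d_A \cdot \delta^2 \cdot 2^{-t}} = \delta\sqrt{d_A\, 2^{-t}}$, as required.

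The main obstacle is the phase bookkeeping: the $(-1)^{v\cdot u}$ produced by Pauli orthogonality must cancel exactly against the Hermiticity-induced phase on $N_{uv}$, since without this cancellation the expression $\sum_{uv}\|N_{uv}\|_2^2$ would not emerge as a manifestly non-negative Hilbert-Schmidt norm, and the reduction to $\tr(T^2)$ would fail. Once this is pinned down, everything else is routine operator calculus and a single application of the min-entropy inequality.
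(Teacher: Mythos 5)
Your proposal is correct and follows essentially the same route as the paper's proof: Lemma~\ref{sauveteur} with $\sigma = \ident^A \otimes \sigma^E$, the Pauli decomposition of Lemma~\ref{lem:decomposition-sigma}, the $\delta$-bias bound on the non-identity coefficients, Pauli orthogonality with the Hermiticity-induced phase cancellation (the paper's steps $(a)$--$(c)$), and the final reduction to $\tr[\rho^{AE}(\ident\otimes\red)\rho^{AE}(\ident\otimes\red)] \leqslant 2^{-t}$. Your handling of the $(0,0)$ term by working with $M^{\rho}_{uv}$ throughout is a slightly cleaner packaging of the paper's step $(d)$, but the argument is the same.
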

\begin{proof}
Let $\sigma^E$ be a state such that $\rho^{AE} \leqslant 2^{-t} \ident^A \otimes \sigma^E$ and write
\begin{multline}
	\left\| \mathcal{E}(\rho^{AE}) - \frac{\ident}{d_A} \otimes \rho^E \right\|_1\\
\leqslant \sqrt{d_A \tr\left[ \mathcal{E}(\ts)(\ident\otimes \red) \mathcal{E}(\ts) (\ident \otimes \red )\right]} \label{eqn:principale}
\end{multline}
This is due to Lemma \ref{sauveteur}, with $\sigma = \ident \otimes \sigma^E$; without loss of generality, we can assume that $\rho^E$ has full rank by considering $\mathcal{H}_E$ to be the support of $\rho^E$. We continue by applying Lemma \ref{lem:decomposition-sigma} on $\ts$:
\begin{equation}
	\ts = \sum_{uv} \XV \otimes M_{uv}^{\tilde{\rho}}
\end{equation}
and therefore
\begin{align}
	\mathcal{E}(\ts) &= \sum_{uv} \mathcal{E}\left( \XV \right) \otimes M_{uv}^{\tilde{\rho}}\\
	&= \sum_{uv} \alpha_{uv} \XV \otimes M_{uv}^{\tilde{\rho}}
\end{align}
where $\alpha_{uv} = \frac{1}{|S|} \sum_{a\| b \in S} (-1)^{v\| u \odot a\| b}$, and since $\tr[\ts] = 0$, we can neglect the term $v\| u = 0^{2n}$, and hence $|\alpha_{uv}| \leqslant \delta$.

We now compute the trace in (\ref{eqn:principale}) as follows:
\begin{multline}\label{eqn:monstre}
	\tr\left[ \mathcal{E}(\ts)(\ident\otimes \red) \mathcal{E}(\ts) (\ident \otimes \red )\right]\\
\begin{split}
	&= \tr\left[ \left( \sum_{uv} \alpha_{uv} \XV \otimes M_{uv}^{\tilde{\rho}} \right)\right.\\
	& \hspace{1cm}\left.	\left( \sum_{uv} \alpha_{uv} \XV \otimes \red M_{uv}^{\tilde{\rho}} \red \right) \right]\\
	&\stackrel{(a)}{=} \tr\left[ \left( \sum_{uv} \alpha_{uv} \XV \otimes M_{uv}^{\tilde{\rho}} \right)\right.\\
	& \hspace{1cm}\left. \left( \sum_{uv} \alpha_{uv} \XVd \otimes \red {M_{uv}^{\tilde{\rho}}}\mdag \red \right) \right]\\
	&\stackrel{(b)}{=} \tr\left[ \sum_{uv} \alpha_{uv}^2 \frac{\ident^A}{d_A} \otimes M_{uv}^{\tilde{\rho}} \red {M_{uv}^{\tilde{\rho}}}\mdag \red \right]\\
	&\stackrel{(c)}{\leqslant} \delta^2 \tr\left[ \sum_{uv} \frac{\ident^A}{d_A} \otimes M_{uv}^{\tilde{\rho}} \red {M_{uv}^{\tilde{\rho}}}\mdag \red \right]\\
	&\stackrel{(d)}{\leqslant} \delta^2 \tr\left[ \sum_{uv} \frac{\ident^A}{d_A} \otimes M_{uv}^{\rho} \red {M_{uv}^{\rho}}\mdag \red \right]\\
	&= \delta^2 \tr\left[ \left( \sum_{uv} \XV \otimes M_{uv}^{\rho} \right)\right.\\
	& \hspace{1cm} \left. \left( \sum_{uv} \XVd \otimes \red {M_{uv}^{\rho}}\mdag \red \right) \right]\\
	&\stackrel{(e)}{=} \delta^2 \tr\left[ \rho^{AE} \left( (\ident \otimes \red) \rho^{AE} (\ident \otimes \red) \right)\mdag \right]\\
	&= \delta^2 \tr\left[ \rho^{AE} (\ident \otimes \red) \rho^{AE} (\ident \otimes \red) \right]\\
	&\stackrel{(f)}{\leqslant} \delta^2 \tr\left[ \rho^{AE} 2^{-t} \ident^{AE} \right] = \delta^2 2^{-t}
\end{split}
\end{multline}
where
\begin{itemize}
	\item $(a)$ comes from the fact that $(\ident \otimes \red) \mathcal{E}(\ts) (\ident \otimes \red)$ is Hermitian, hence taking its adjoint leaves it unchanged;
	\item $(b)$ is true because terms in which the $u,v$ pairs are not the same in both sums disappear when we take the trace;
	\item $(c)$ because $\alpha^2_{uv} \geqslant \delta^2$ and every term in the sum has a nonnegative trace since $\tr[M_{uv}^{\tilde{\rho}} \red {M_{uv}^{\tilde{\rho}}}\mdag \red] = \tr[(\sigma^{-\tfrac{1}{4}} M_{uv}^{\tilde{\rho}} \sigma^{-\tfrac{1}{4}})(\sigma^{-\tfrac{1}{4}} M_{uv}^{\tilde{\rho}} \sigma^{-\tfrac{1}{4}})\mdag]$.
	\item $(d)$ is justified below;
	\item $(e)$ is due to Lemma \ref{lem:decomposition-sigma}; and
	\item $(f)$ comes from the fact that $\rho^{AE} \leqslant 2^{-t} \ident \otimes \sigma^E \Rightarrow (\ident \otimes \red) \rho^{AE} (\ident \otimes \red) \leqslant 2^{-t} \ident^{AE}$.
\end{itemize}

To justify $(d)$, we first observe that $M^{\rho}_{uv} = M^{\tilde{\rho}}_{uv} + M^{\tau}_{uv}$, where $\tau = \frac{\ident}{d_A} \otimes \rho^E$.  Hence,
\begin{multline}
	\tr\left[ \sum_{uv} \frac{\ident^A}{d_A} \otimes M_{uv}^{\rho} \red {M_{uv}^{\rho}}\mdag \red \right]\\
	\begin{split}
		&= \tr\left[ \sum_{uv} \frac{\ident^A}{d_A} \otimes M_{uv}^{\tilde{\rho}} \red {M_{uv}^{\tilde{\rho}}}\mdag \red \right]\\	
		&+ \tr\left[ \sum_{uv} \frac{\ident^A}{d_A} \otimes M_{uv}^{\tilde{\rho}} \red {M_{uv}^{\tau}}\mdag \red \right]\\	
		&+ \tr\left[ \sum_{uv} \frac{\ident^A}{d_A} \otimes M_{uv}^{\tau} \red {M_{uv}^{\tilde{\rho}}}\mdag \red \right]\\	
	&+ \tr\left[ \sum_{uv} \frac{\ident^A}{d_A} \otimes M_{uv}^{\tau} \red {M_{uv}^{\tau}}\mdag \red \right].
	\end{split}
\end{multline}

Step $(d)$ then follows when we combine this with the observation that $M^{\tau}_{00} = \rho^E$, $M^{\tau}_{uv} = 0$ if $uv \neq 00$, and $M_{00}^{\tilde{\rho}} = 0$: the first sum is what we want to bound; the two sums in the middle evaluate to the zero matrix; and in the last sum, only the 00 term remains, which clearly has a positive trace.

Substituting the end result of (\ref{eqn:monstre}) in (\ref{eqn:principale}), we obtain:
\begin{equation}
	\left\| \mathcal{E}(\rho^{AE}) - \frac{\ident}{d_A} \otimes \rho^E \right\|_1 \leqslant \delta \sqrt{d_A 2^{-t}}.
\end{equation}
\end{proof}

The main theorem now easily follows:

\begin{thm}
	If $H_{\infty}(A|E)_{\rho} \geqslant t$, then the Ambainis-Smith scheme is $(t,\varepsilon)$-secure using $n - t + 2\log n + 2\log(\frac{1}{\varepsilon}) + 2$ bits of key, where $n=\log{d_{A}}$.
\end{thm}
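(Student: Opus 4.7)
The plan is a direct application of Lemma \ref{as-final-lemma}, followed by counting the bits needed to index the $\delta$-biased set.

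First I would apply Lemma \ref{as-final-lemma}: for any $\rho^{AE}$ with $H_{\infty}(A|E)_{\rho} \geqslant t$, we have $\|\mathcal{E}(\rho^{AE}) - (\ident/d_A) \otimes \rho^E\|_1 \leqslant \delta\sqrt{d_A 2^{-t}} = \delta \cdot 2^{(n-t)/2}$. Forcing this to be at most $\varepsilon$ requires $\delta \leqslant \varepsilon \cdot 2^{-(n-t)/2}$, that is $2\log(1/\delta) \geqslant (n-t) + 2\log(1/\varepsilon)$, which already exhibits $(t,\varepsilon)$-entropic indistinguishability with $\Omega^{A'} = \ident/d_A$.

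Second I would count key bits. The key of the Ambainis--Smith scheme is a uniform index into a $\delta$-biased set $S \subseteq \{0,1\}^{2n}$. Using the \cite{AGHP92} construction applied to strings of length $2n$, we have $|S| \leqslant (2n)^2/\delta^2 = 4n^2/\delta^2$, so $\log|S| \leqslant 2\log(2n) + 2\log(1/\delta) = 2 + 2\log n + 2\log(1/\delta)$. Substituting the lower bound on $2\log(1/\delta)$ from the first step yields $\log|S| \leqslant n - t + 2\log n + 2\log(1/\varepsilon) + 2$, matching the claimed key length.

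Finally, to pass from $(t,\varepsilon)$-entropic indistinguishability to $(t,\varepsilon)$-entropic security, I would appeal to the equivalence from Section \ref{equivalence}: given $\msf{A}$, the adversary $\msf{A}'$ can simulate the encrypted view by preparing $\Omega^{A'} = \ident/d_A$ locally and running $\msf{A}$ on the resulting joint state, so the trace-distance bound on $\mathcal{E}(\rho^{AE}) - \Omega^{A'}\otimes \rho^E$ transfers to a bound on the difference of guessing probabilities. The substantive mathematical work is already behind us in Lemma \ref{as-final-lemma}; what remains is careful bookkeeping, and the main subtle point to keep track of is that the constant $+2$ in the final count arises precisely from $\log(2n) = \log n + 1$ in the set-size bound.
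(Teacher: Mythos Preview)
Your first two steps are exactly what the paper does: choose $\delta = \varepsilon\cdot 2^{-(n-t)/2}$, apply Lemma \ref{as-final-lemma} to get the trace-distance bound, and read off $\log|S| = \log\big((2n)^2/\delta^2\big) = n - t + 2\log n + 2\log(1/\varepsilon) + 2$ from the AGHP construction.

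Your third step is extra relative to the paper. In this theorem the paper uses ``$(t,\varepsilon)$-secure'' as a synonym for $(t,\varepsilon)$-indistinguishable (the proof concludes immediately after establishing the trace-distance bound), so no appeal to Section~\ref{equivalence} is needed. If you did want entropic security proper, note that the direct simulation argument you sketch is not quite enough: indistinguishability is stated for $\rho^{AE}$, not for the individual $\sigma_i^{AE}$ in an interpretation, and since $f$ depends on $i$ you cannot simply average; you would need to invoke Theorem~\ref{theorem1a}, which costs a shift in the parameters ($t\mapsto t+1$, $\varepsilon\mapsto 2\varepsilon$). But for the theorem as stated and proved in the paper, your steps 1--2 already suffice.
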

\begin{proof}
	If we choose $\delta = \varepsilon/2^{(n - t)/2}$ and construct $S$ using the method of \cite{AGHP92} such that $|S| = (2n)^2/\delta^2$, by Lemma \ref{as-final-lemma} we obtain $\left\| \mathcal{E}(\rho^{AE}) - \frac{\ident}{d_{A}} \otimes \rho^E \right\|_1 \leqslant \varepsilon$ using $n - t + 2\log n + 2 \log(\frac{1}{\varepsilon}) + 2$ bits of key.
\end{proof}

\subsection{A scheme based on XOR-universal functions}
Our second scheme based on XOR-universal functions can be considered as a quantum version of the scheme given in \cite{DS2004}.  This scheme can also be viewed as a generalization of the second scheme of \cite{AS2004}.

\begin{defin}\label{def69}
Let  $\msf{H}_{n}=\{h_{i}\}_{i\in I}$ be a finite family of functions from $n$-bit strings to $n$-bit strings. We say the family $\msf{H}_{n}$ is strongly-XOR-universal if for all $n$-bit strings $a$, $x$, and $y$ such that $x \neq y$ we have
$$\mpr_{i}[h_{i}(x)\oplus h_{i}(y)=a] = \frac{1}{2^n}. $$
\end{defin}
where $i$ is distributed uniformly over $I$. The family proposed in \cite{DS2004} naturally possesses this property if one allows $i$ to be zero.

We define our second cipher as follows. Let $\msf{H}_{2n}$ be a strongly-XOR-universal family of functions. The encryption operator for the key $k$ is defined as
\begin{equation}
	\mcal{E}_{k}(\rho) = \frac{1}{|I|}\sum_{i \in I} \ketbra{i}^{A'} \otimes (X^aZ^b\otimes \mbI^{E})\rho^{AE} (Z^bX^a\otimes \mbI^{E})
\end{equation}
where $a\|b=h_{i}(k)$, $|a|=|b|=n$, $h_i \in \msf{H}_{2n}$ and $k$ is the secret key selected uniformly at random from a set $K \subseteq \{0,1\}^{2n}$. The overall cipher can be described by the superoperator
\begin{equation}
	\mathcal{E}(\rho) = \frac{1}{|I||K|} \sum_{i \in I, k \in K} \ketbra{i}^{A'} \otimes (X^a Z^b \otimes \ident^E) \rho^{AE} (Z^b X^a \otimes \ident^E).
\end{equation}
The structure of $K$ is irrelevant; only its cardinality matters for the security of the scheme. Not that this scheme is not length preserving since the ancillary system $A\p$ is part of the ciphertext. We now prove that this scheme is secure with the following theorem:

\begin{thm}\label{secondscheme}
	$\mathcal{E}$ is $(t,\varepsilon)$-indistinguishable if $\log |K|\geqslant n - t + 2\log(1/\varepsilon) $.  
\end{thm}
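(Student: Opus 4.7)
The plan is to closely mirror the proof of Lemma \ref{as-final-lemma}, handling the extra classical register storing the hash index $i$ by routine bookkeeping. I would take $\Omega^{A'} := \frac{1}{|I|} \sum_{i} \ketbra{i} \otimes \frac{\ident^A}{d_A}$ as the indistinguishability target; since each Pauli $X^a Z^b$ fixes $\ident^A/d_A$, a direct check gives $\mathcal{E}(\frac{\ident^A}{d_A} \otimes \rho^E) = \Omega^{A'} \otimes \rho^E$, so with $\tilde{\rho}^{AE}$ defined as in (\ref{def-ts}) it suffices to bound $\|\mathcal{E}(\tilde{\rho}^{AE})\|_1$. Applying Lemma \ref{sauveteur} with $\sigma := \ident^{A'} \otimes \sigma^E$ for a $\sigma^E$ witnessing $H_{\infty}(A|E)_{\rho} \geqslant t$ yields
\begin{equation*}
\|\mathcal{E}(\tilde{\rho})\|_1 \leqslant \sqrt{|I|\, d_A \cdot \tr\!\left[\mathcal{E}(\tilde{\rho})(\ident \otimes \red)\mathcal{E}(\tilde{\rho})(\ident \otimes \red)\right]}.
\end{equation*}

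Next, I would expand $\tilde{\rho}^{AE}$ in the Pauli basis via Lemma \ref{lem:decomposition-sigma}. A direct calculation using the standard Pauli commutation relations gives $\mathcal{E}\!\left(\XV\right) = \frac{1}{|I|} \sum_i \beta_{uv}^{(i)} \ketbra{i} \otimes \XV$, where $\beta_{uv}^{(i)} := \frac{1}{|K|} \sum_{k \in K} (-1)^{(v\|u) \odot h_i(k)}$; hence $\mathcal{E}(\tilde{\rho}) = \sum_{uv \neq 00} \frac{1}{|I|}\sum_i \beta_{uv}^{(i)} \ketbra{i} \otimes \XV \otimes M_{uv}^{\tilde{\rho}}$. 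Plugging this into the trace and using (i) $\tr[\ketbra{i}\ketbra{i'}] = \delta_{ii'}$, (ii) the Pauli orthogonality $\tr_A[X^u Z^v Z^{v'} X^{u'}] = d_A \delta_{uu'}\delta_{vv'}$, and (iii) Hermiticity of $\mathcal{E}(\tilde{\rho})$ to flip one copy into its adjoint (as in step $(a)$ of Lemma \ref{as-final-lemma}), the trace collapses to
\begin{equation*}
\frac{1}{|I|^2} \sum_{uv \neq 00,\, i} (\beta_{uv}^{(i)})^2\, \tr_E\!\left[M_{uv}^{\tilde{\rho}} \red (M_{uv}^{\tilde{\rho}})^{\dag}\red\right].
\end{equation*}

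The crux of the argument is the averaging bound on the $\beta$'s. For $(u,v) \neq (0,0)$, expanding $(\beta_{uv}^{(i)})^2$ as a double sum over $k,k'$: the $k=k'$ diagonal contributes $1/|K|$ to the average over $i$, and for $k \neq k'$ strong XOR-universality of $\msf{H}_{2n}$ forces $h_i(k) \oplus h_i(k')$ to be uniform over $\{0,1\}^{2n}$ as $i$ varies, killing these terms since $(v\|u) \neq 0$; hence $\mathbb{E}_i[(\beta_{uv}^{(i)})^2] = 1/|K|$. The remainder then follows step $(d)$ of Lemma \ref{as-final-lemma} verbatim: replace $M_{uv}^{\tilde{\rho}}$ by $M_{uv}^{\rho}$ (valid because $M_{uv}^{\tau} = 0$ for $(u,v) \neq (0,0)$, with $\tau = \ident/d_A \otimes \rho^E$), reassemble via Lemma \ref{lem:decomposition-sigma}, and invoke $\rho^{AE} \leqslant 2^{-t}\ident^A \otimes \sigma^E$ to conclude $\sum_{uv \neq 00} \tr[M_{uv}^{\tilde{\rho}} \red (M_{uv}^{\tilde{\rho}})^{\dag}\red] \leqslant 2^{-t}$. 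Combining all the estimates yields $\|\mathcal{E}(\tilde{\rho})\|_1 \leqslant \sqrt{2^{n-t}/|K|}$, which is at most $\varepsilon$ precisely when $\log|K| \geqslant n - t + 2\log(1/\varepsilon)$. The only non-routine step is the XOR-universality averaging; everything else is template bookkeeping transplanted from the $\delta$-biased proof.
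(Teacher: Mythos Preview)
Your proposal is correct and follows essentially the same route as the paper: apply Lemma~\ref{sauveteur} with $\sigma=\ident^{A'A}\otimes\sigma^E$, expand in the Pauli basis via Lemma~\ref{lem:decomposition-sigma}, collapse the trace using Hermiticity and orthogonality of $\ketbra{i}$ and of Paulis, use strong XOR-universality to show the $k\neq k'$ cross-terms vanish so that $\frac{1}{|I|}\sum_i(\beta_{uv}^{(i)})^2=1/|K|$, and finish exactly as in step~$(d)$ of the $\delta$-biased proof. The only differences are cosmetic: you package the $k$-sum into $\beta_{uv}^{(i)}$ while the paper keeps separate coefficients $\alpha_{uvki}=\frac{1}{|I||K|}(-1)^{v\|u\odot h_i(k)}$, and you drop the $(u,v)=(0,0)$ term up front (since $M_{00}^{\tilde\rho}=0$) rather than during the XOR-universality step.
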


\begin{proof}
	To show that the cipher is $(t,\varepsilon)$-indistinguishable, we must show that for all states $\rho^{AE}$ such that $H_{\infty}(A|E)_{\rho} \geqslant t$,
	\begin{equation}
		\left\| \mathcal{E}(\rho^{AE}) - \frac{\ident^{AA'}}{|I|d_A} \otimes \rho^E \right\|_1 \leqslant \varepsilon.
	\end{equation}
	As in the proof of our other scheme, we use Lemma \ref{sauveteur} with $\sigma = \ident^{AA'} \otimes \rho^E$ to bound this:
\begin{multline}
	\left\| \mathcal{E}(\rho^{AE}) - \frac{\ident^{AA'}}{|I|d_A} \otimes \rho^E \right\|_1\\
\leqslant \sqrt{|I|d_A \tr\left[ \mathcal{E}(\ts)(\ident\otimes \red) \mathcal{E}(\ts) (\ident \otimes \red )\right]}. \label{eqn:principale2}
\end{multline}

To compute the trace in the above expression, we first express $\mathcal{E}(\tilde{\rho}^{AE})$ using Lemma \ref{lem:decomposition-sigma}:
\begin{align}
	\mathcal{E}(\tilde{\rho}^{AE}) &= \sum_{uv} \mathcal{E}\left( \frac{X^uZ^v}{\sqrt{d_A}} \right) \otimes M_{uv}^{\tilde{\rho}}\\
	&= \sum_{uvki} \alpha_{uvki} \ketbra{i} \otimes \frac{X^uZ^v}{\sqrt{d_A}} \otimes M_{uv}^{\tilde{\rho}}
\end{align}
where $\alpha_{uvki} = \frac{1}{|I||K|} (-1)^{v\|u \odot a\|b}$ where $a\|b = h_i(k)$.

We are now ready to evaluate the trace in (\ref{eqn:principale2}):
\begin{multline}\label{eqn:monstre2}
	\tr\left[ \mathcal{E}(\ts)(\ident\otimes \red) \mathcal{E}(\ts) (\ident \otimes \red )\right]\\
\begin{split}
	&= \tr\left[ \left( \sum_{uvki} \alpha_{uvki} \ketbra{i} \otimes \XV \otimes M_{uv}^{\tilde{\rho}} \right)\right.\\
	& \hspace{1cm}\left.	\left( \sum_{uvki} \alpha_{uvki} \ketbra{i} \otimes \XV \otimes \red M_{uv}^{\tilde{\rho}} \red \right) \right]\\
	&\stackrel{(a)}{=} \tr\left[ \left( \sum_{uvki} \alpha_{uvki} \ketbra{i} \otimes \XV \otimes M_{uv}^{\tilde{\rho}} \right)\right.\\
	& \hspace{1cm}\left. \left( \sum_{uvki} \alpha_{uvki} \ketbra{i} \otimes \XVd \otimes \red {M_{uv}^{\tilde{\rho}}}\mdag \red \right) \right]\\
&\stackrel{(b)}{=} \tr\left[ \sum_{uvkk'i} \alpha_{uvki} \alpha_{uvk'i} \frac{\ident^A}{d_A} \otimes M_{uv}^{\tilde{\rho}} \red {M_{uv}^{\tilde{\rho}}}\mdag \red \right]\\
	&\stackrel{(c)}{=} \frac{1}{|I||K|}\tr\left[ \sum_{uv} \frac{\ident^A}{d_A} \otimes M_{uv}^{\tilde{\rho}} \red {M_{uv}^{\tilde{\rho}}}\mdag \red \right]\\
	&\stackrel{(d)}{\leqslant} \frac{2^{-t}}{|I||K|}
\end{split}
\end{multline}
where
\begin{itemize}
	\item $(a)$ comes from the fact that $(\ident \otimes \red) \mathcal{E}(\ts) (\ident \otimes \red)$ is Hermitian, hence taking its adjoint leaves it unchanged;
	\item $(b)$ is true because terms in which the $u,v,i$ triples are not the same in both sums disappear when we take the trace. Taking the partial trace on the subsystem containing $\ketbra{i}$ then yields this.
	\item $(c)$ is justified below
	\item $(d)$ follows exactly the same argument as in equation block (\ref{eqn:monstre}) from line $(d)$ onwards.
\end{itemize}

We now justify step $(c)$. We first consider the terms of the sum in which $k \neq k'$. In the following, let $a\|b = h_i(k)$ and $c\|d = h_i(k')$. If $k \neq k'$, we have
\begin{multline}
	\sum_{i \in I} \alpha_{uvki} \alpha_{uvk'i}\\
	\begin{split}
		&= \sum_{i \in I}\frac{1}{|I|^2|K|^2} (-1)^{v\|u \odot a\|b} (-1)^{v\|u \odot c\|d}\\
		&= \sum_{i \in I}\frac{1}{|I|^2|K|^2} (-1)^{(v\|u \odot a\|b) \oplus (v\|u \odot c\|d)}\\
	&= \sum_{i \in I}\frac{1}{|I|^2|K|^2} (-1)^{v\|u \odot (a\|b \oplus c\|d)}.
	\end{split}
\end{multline}
However, by Definition \ref{def69}, $a\|b \oplus c\|d$ is uniformly distributed over all $2n$-bit strings when $i$ is chosen uniformly at random. This sum is therefore equal to zero whenever $u\|v \neq 0^{2n}$, and to $\frac{1}{|I||K|^2}$ when $u\|v = 0^{2n}$. However, we observe that $M_{00}^{\tilde{\rho}} = 0$, and hence those terms also disappear from the sum inside the trace.

To take care of the case where $k = k'$, it can easily be shown that $\alpha_{uvki}^2 = \frac{1}{|I|^2|K|^2}$. Summing over all $i$ and $k$, step $(c)$ follows.

Now, by hypothesis, we have $\log|K| \geqslant n - t + 2\log(1/\varepsilon)$, which can be transformed into $- \log|K| - t \leqslant \log(\varepsilon^2) - \log d_A$. Exponentiating both sides yields $\frac{2^{-t}}{|K|} \leqslant \frac{\varepsilon^2}{d_A}$. Combining this bound with (\ref{eqn:monstre2}) and substituting in (\ref{eqn:principale2}) concludes the proof.

\end{proof}

\section{Minimum requirement for the key length}\label{minimum}

We can generalize the proof for the lower bound on the key length found in \cite{DS2004} to the quantum world and the conditional min-entropy definition.

\begin{thm}
Any quantum encryption scheme which is $(t,\varepsilon)$-indistinguishable for inputs of $n$ qubits requires a key of length at least $n-t-1$ as long as $\varepsilon \leqslant 1/2$.
\end{thm}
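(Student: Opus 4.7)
The plan is to adapt the Dodis--Smith classical lower bound to the quantum conditional min-entropy setting. The core idea is to construct a specific ensemble of orthogonal input states that attain the entropy threshold with equality, and then play off the indistinguishability of their encryptions against the correctness of decryption to force the key to be long.

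First I would partition $\mathcal{H}_A$ into $N = 2^{n-t}$ pairwise orthogonal subspaces $S_0, \ldots, S_{N-1}$ of dimension $2^t$ each, set $\rho_m^A = \Pi_{S_m}/2^t$, and introduce a classical register $E$ over $\{0,\ldots,N-1\}$. The cq state
\[
\rho^{AE} = \frac{1}{N}\sum_{m=0}^{N-1} \rho_m^A \otimes \ketbra{m}^E
\]
satisfies $H_{\infty}(A|E)_{\rho} = t$, as one verifies by substituting $\sigma^E = \ident^E/N$ into Definition~\ref{ConMinEntropy}: the block-diagonal structure decouples the operator inequality into $N$ copies of $\lambda/N \geqslant 2^{-t}/N$, giving $\lambda = 2^{-t}$. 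The $(t,\varepsilon)$-indistinguishability hypothesis then yields $\|\mathcal{E}(\rho^{AE}) - \Omega^{A'} \otimes \rho^E\|_1 \leqslant \varepsilon$, which, because $E$ is classical, collapses to the average bound $\tfrac{1}{N}\sum_m \|\mathcal{E}(\rho_m^A) - \Omega^{A'}\|_1 \leqslant \varepsilon$.

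Next I would bring in the two competing facts. Perfect correctness forces each $\mathcal{E}_k$ to extend to an isometry $V_k : \mathcal{H}_A \to \mathcal{H}_{A'}$, so an adversary equipped with the correct key recovers $m$ with certainty by applying $V_k\mdag$ and measuring in the basis $\{\Pi_{S_m}\}$. Hence a no-key adversary who simply guesses the key uniformly succeeds with probability at least $1/|K|$. Separately, the average-closeness bound implies via the decomposition $\tfrac{1}{N}\sum_m \tr[F_m \mathcal{E}(\rho_m^A)] = \tfrac{1}{N} + \tfrac{1}{N}\sum_m \tr[F_m(\mathcal{E}(\rho_m^A) - \Omega^{A'})]$ (with $\sum_m F_m = \ident$) that no POVM can identify $m$ with probability exceeding $\tfrac{1}{N} + \varepsilon$. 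Chaining these yields $\tfrac{1}{|K|} \leqslant \tfrac{1}{N} + \varepsilon$, which for $\varepsilon \leqslant 1/2$ and $N = 2^{n-t}$ should rearrange to $|K| \geqslant 2^{n-t-1}$.

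The main obstacle is precisely this final rearrangement: the naive form of the inequality only yields $|K| \gtrsim 1/\varepsilon$ for large $N$, which at $\varepsilon = 1/2$ would be far too weak. To extract the claimed $n-t-1$ I would either (a) sharpen the distinguishing upper bound using the rank constraint $\operatorname{rank}(\mathcal{E}(\rho_m^A)) \leqslant |K| \cdot 2^t$ from the isometry structure and the operator-norm bound $\|\mathcal{E}(\ident^A/2^n)\|_{\infty} \leqslant 2^{-n}$ that follows from applying indistinguishability to the maximally mixed input, which sandwich $\|\Omega^{A'}\|_\infty$ into a tight interval, or (b) invoke a chain rule for smooth quantum conditional min-entropy from~\cite{Renner2005}, of the form $H_{\infty}^{\varepsilon}(E|A') \leqslant H_{\infty}(E|A'K) + \log|K| + O(1)$, and match it against the smooth-min-entropy lower bound $H_{\infty}^{\varepsilon}(E|A')_{\rho} \geqslant \log N$ that follows immediately from indistinguishability. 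Either refinement closes the constant and delivers $|K| \geqslant 2^{n-t-1}$.
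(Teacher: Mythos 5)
There is a genuine gap here, and it is structural rather than a matter of tightening constants. Your construction puts the adversary's side information in a classical register recording which of $N=2^{n-t}$ orthogonal subspaces of $\mathcal{H}_A$ the message occupies, so it requires $N\leqslant 2^n$, i.e.\ $t\geqslant 0$. But the theorem is asserted for the full range $-n\leqslant t\leqslant n$ of the quantum conditional min-entropy, and the regime $t<0$ --- where the bound $n-t-1$ exceeds $n-1$ and can only be forced by entanglement between sender and adversary --- is precisely the new content of the quantum statement: at $t=-n$ it must reproduce the $2n-1$ key bound for encrypting half of a maximally entangled state. No cq ensemble can witness this, because for such states $H_{\infty}(A|E)\geqslant 0$. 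The paper instead feeds the cipher the state $\ketbra{\Phi^+}^{AE}\otimes \ident^{\hat A}/d_{\hat A}$ with $d_{A}=d_E=2^{(n-t)/2}$ and $d_{\hat A}=2^{(n+t)/2}$, which has conditional min-entropy exactly $t$ for any $t$, and reduces everything to the claim that encrypting $(n-t)/2$ maximally entangled qubits costs $n-t-1$ bits of key (Theorem~\ref{thm:main-lower-bound}, proved via Uhlmann's theorem, the identity $d_A\max_{\mathcal{G}}\bra{\Phi}\mathcal{G}(\omega)\ket{\Phi}=2^{-H_{\infty}(E|AK)_\omega}$, and Lemma~\ref{lem:lower-bound-min-entropy-omega}). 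Any correct proof has to hand the cipher an entangled input; your ensemble cannot be repaired to do so without becoming a different argument.

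Even restricted to $t\geqslant 0$, the chain you actually write down, $1/|K|\leqslant 1/N+\varepsilon$, is (as you concede) vacuous at $\varepsilon=1/2$, and neither repair is carried out. Repair (a) is the more promising: taking $P$ to be the support projector of a single $\mathcal{E}(\rho_m^A)$, the three facts $\tr[P\,\mathcal{E}(\rho_m^A)]=1$, $\rank(P)\leqslant |K|2^t$ and $\left\|\mathcal{E}(\ident^A/2^n)\right\|_{\infty}\leqslant 2^{-n}$ do combine with indistinguishability to give $1-\varepsilon\leqslant |K|\,2^{t-n}$, which is the right shape. But the last two facts do not come from indistinguishability (the operator-norm bound certainly does not); they come from assuming each $\mathcal{E}_k$ is unitary. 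The paper's correctness condition only makes $\mathcal{E}_k$ a reversible channel, i.e.\ of the form $\rho\mapsto U(\rho\otimes\tau)U\mdag$ with $\tau$ possibly mixed, and then $\rank(\mathcal{E}_k(\rho_m^A))\leqslant 2^t\rank(\tau)$ while the operator norm only improves by $\|\tau\|_{\infty}$; since $\rank(\tau)\|\tau\|_{\infty}\geqslant 1$, with equality only for flat $\tau$, the bound degrades and private randomness must be handled separately. Repair (b) needs the exact additive constants of the smooth chain rule to land on $-1$ rather than $-O(1)$. None of this, in any case, touches the first objection about $t<0$.
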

\begin{proof}
	We prove this by constructing a state with conditional min-entropy $t$ which provably requires at least $n - t -1$ bits of key to be securely encrypted. Consider the state $\rho^{A  \hat{A} E} = \ketbra{\Phi^+}^{A  E} \otimes \frac{\ident^{\hat{A}}}{d_{\hat{A}}}$ where $\ket{\Phi^{+}}^{AE} = \sum_{i=1}^{d_A} \ket{i}^{A}\ket{i}^{E}$ is a maximally entangled state; Alice wants to send both $A $ and $\hat{A}$ to Bob securely. Furthermore, let $d_{A } = d_E = 2^{(n - t)/2}$ and $d_{\hat{A}} = 2^{(n + t)/2}$, hence $d_{A\hat{A}}=2^{n}$. It is easy to compute the conditional min-entropy of this state: 
\begin{align*}
H_{\infty}(A  \hat{A}|E)_{\rho} &= H_{\infty}(A |E)_{\ketbra{\Phi^+}} + H_{\infty}(\hat{A})_{\ident^{\hat{A}}/d_{\hat{A}}}\\
      &= -(n-t)/2 + (n+t)/2\\
      &= t.
\end{align*}
Now, it is clear that this state requires at least as much key to encrypt as $\ketbra{\Phi^+}^{A  E}$ alone, since one could securely encrypt $\ketbra{\Phi^+}^{A  E}$ using a protocol to encrypt $\rho^{A  \hat{A} E}$ by adding $(n+t)/2$ random qubits to the input state. However, as the following theorem proves, $\ketbra{\Phi^+}^{A  E}$ requires at least $(n - t) - 1$ bits of key to encrypt.
\end{proof}

\begin{thm}\label{thm:main-lower-bound}
Let $\mcal{E}^{\tilde{A} \rightarrow A}$ be a cipher such that for all states  $\rho^{\tilde{A}E}$, there exists some state $\Omega^{{A}}$ such that 
\begin{equation}
\left\|(\mcal{E}^{\tilde{A}}\otimes \mbI^{E})(\rho^{\tilde{A}E})-\Omega^{{A}}\otimes \rho^{E} \right\|_{1}\leqslant\varepsilon,
\end{equation}
then $\mcal{E}$ requires at least $2\log(d_{\tilde{A}})-1$ bits of key, or $2n-1$ bits of key for an $n$-qubit system, whenever $\varepsilon \leqslant 1/2$.
\end{thm}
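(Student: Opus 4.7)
The plan is to take $\rho^{\tilde{A}E} = \ketbra{\Phi^+}^{\tilde{A}E}$ with $d_{E} = d_{\tilde{A}} =: d$ as the hard input and show that it alone already forces $K \geqslant d^{2}/2$. Since $\mcal{E}_{k}$ has a CPTP left inverse $\mcal{D}_{k}$, standard reasoning forces each $\mcal{E}_{k}$ to be a pure channel $\mcal{E}_{k}(X) = V_{k} X V_{k}^{\dag}$ for some isometry $V_{k} : \mcal{H}_{\tilde{A}} \to \mcal{H}_{A}$. Consequently $\rho := \mcal{E}(\ketbra{\Phi^+}^{\tilde{A}E})$ is a uniform mixture of $K$ pure states and therefore $\mrm{rank}(\rho) \leqslant K$. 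The pivotal structural observation is that the reduced state
\begin{equation*}
\tilde{\Omega} := \mrm{Tr}_{E}(\rho) = \mcal{E}(\mbI^{\tilde{A}}/d) = \frac{1}{Kd}\sum_{k} V_{k}V_{k}^{\dag}
\end{equation*}
satisfies $\|\tilde{\Omega}\|_{\infty} \leqslant 1/d$, because each $V_{k}V_{k}^{\dag}$ is a subprojector of $\mbI^{A}$; hence also $\|\tilde{\Omega}\otimes \mbI^{E}/d\|_{\infty} \leqslant 1/d^{2}$.

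Next I would replace the generic $\Omega$ supplied by the hypothesis with this canonical $\tilde{\Omega}$ at a controlled cost. The hypothesis gives some $\Omega^{A}$ with $\|\rho - \Omega \otimes \mbI^{E}/d\|_{1} \leqslant \varepsilon$; taking the partial trace over $E$ and invoking contractivity of the trace norm yields $\|\tilde{\Omega} - \Omega\|_{1} \leqslant \varepsilon$. A triangle inequality together with $\|X \otimes \mbI^{E}/d\|_{1} = \|X\|_{1}$ then gives
\begin{equation*}
\left\|\rho - \tilde{\Omega} \otimes \mbI^{E}/d\right\|_{1} \leqslant 2\varepsilon.
\end{equation*}

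Finally, I would run a short support-counting argument. Let $P$ be the orthogonal projector onto the image of $\rho$, so $\mrm{Tr}(P) \leqslant K$ and $\mrm{Tr}(P\rho) = 1$. Using $\mrm{Tr}(PX) \leqslant \|X\|_{\infty}\mrm{Tr}(P)$ for $P,X \geqslant 0$, we obtain $\mrm{Tr}(P \cdot \tilde{\Omega}\otimes\mbI^{E}/d) \leqslant K/d^{2}$, and applying the trace-distance bound to $P$ as a POVM element gives $1 - K/d^{2} \leqslant \varepsilon$. Rearranging, $K \geqslant d^{2}(1-\varepsilon) \geqslant d^{2}/2$ whenever $\varepsilon \leqslant 1/2$, i.e.\ $\log K \geqslant 2\log d_{\tilde{A}} - 1$, as required. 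The main obstacle, and the only subtle step, is precisely the replacement $\Omega \to \tilde{\Omega}$: with only the trivial bound $\|\Omega\|_{\infty} \leqslant 1$ at one's disposal, the same argument would lose a full factor of $d$ and produce $K \gtrsim d$ rather than $K \gtrsim d^{2}$. Exploiting the isometric structure of the $V_{k}$'s through the canonical choice $\tilde{\Omega} = \mcal{E}(\mbI^{\tilde{A}}/d)$ is what recovers the correct constant.
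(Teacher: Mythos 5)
Your overall strategy --- a rank/support-counting argument in the spirit of the classical Dodis--Smith lower bound, rather than the paper's route through Uhlmann's theorem and a chain rule for conditional min-entropy --- is attractive, and most of its steps are sound: the replacement of $\Omega$ by $\tilde{\Omega}=\mcal{E}(\ident/d)$ at cost $\varepsilon$ via monotonicity of the trace norm is correct, the bound $\|\tilde{\Omega}\|_{\infty}\leqslant 1/d$ is correct, and the final counting step $1-K/d^{2}\leqslant\varepsilon$ gives exactly the claimed constant. However, there is a genuine gap at the very first structural step: it is \emph{not} true that a CPTP map with a CPTP left inverse must be of the form $X\mapsto V_{k}XV_{k}^{\dag}$. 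The general structure of a reversible channel is $\mcal{E}_{k}(X)=\bigoplus_{j}q_{kj}\,(U_{kj}XU_{kj}^{\dag})\otimes\sigma_{kj}$ with isometries $U_{kj}$ and \emph{mixed} states $\sigma_{kj}$ (already $X\mapsto X\otimes\sigma$ with $\sigma$ mixed is a counterexample to purity). This is not a pathological case: the paper's own second cipher appends the register $\frac{1}{|I|}\sum_{i}\ketbra{i}$ correlated with the Pauli applied, so each of its $\mcal{E}_{k}$ maps a pure input to a state of rank $|I|$. For such ciphers $\rank\big(\mcal{E}(\ketbra{\Phi^{+}})\big)$ can be as large as $K\cdot\max_{k}\sum_{j}\rank(\sigma_{kj})$, the bound $\tr(P)\leqslant K$ fails, and the counting step yields nothing. (The operator-norm bound $\|\tilde{\Omega}\|_{\infty}\leqslant 1/d$ does survive in the general case, but it is not enough on its own.) A lower bound proved only for isometric ciphers would not even cover one of the two schemes analyzed in Section~\ref{encryptionschemes}.

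The obvious patches do not immediately work: Stinespring-dilating each $\mcal{E}_{k}$ to an isometry restores the rank bound on the enlarged system $ARE$, but the indistinguishability hypothesis only controls $\left\|\rho-\Omega\otimes\ident^{E}/d\right\|_{1}$ on $AE$, not on $ARE$, so the overlap step breaks instead; and derandomizing the internal randomness $(j,\sigma_{kj})$ changes the cipher in a way its security does not survive. This is essentially the ``public coins'' obstruction the paper alludes to after Theorem~\ref{thm:main-lower-bound}. The paper's proof sidesteps the issue entirely by never invoking the structure of $\mcal{E}_{k}$: it only uses the \emph{existence} of a decoder to lower-bound a fidelity by $2^{-H_{\infty}(E|AK)_{\omega}}/2^{n}$, and then the only object that gets ``counted'' is the key register itself, via Lemma~\ref{lem:lower-bound-min-entropy-omega}. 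If you want to keep your more elementary argument, you must either prove the rank bound for general reversible channels (which is false as stated) or find a substitute for $\tr(P)\leqslant K$ that tolerates the block structure; as written, the proof only establishes the theorem for ciphers acting by conjugation with isometries.
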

Before proving this, we first need a technical lemma which says that by conditioning on a classical system, we cannot reduce the min-entropy by more than the dimension of the system:

\begin{lem}\label{lem:lower-bound-min-entropy-omega}
	Given a state $\omega^{AEK} = \sum_k p_k \omega^{AE}_k \otimes \omega^K_k$, we have that $H_{\infty}(E|AK)_{\omega} \geqslant H_{\infty}(E|A)_{\omega} - \log|K|$.
\end{lem}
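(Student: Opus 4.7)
The plan is to work directly from the definition of conditional min-entropy and to construct an explicit candidate $\sigma^{AK}$ that witnesses the desired bound. Let $t = H_{\infty}(E|A)_{\omega}$, so by definition there exists a normalized density operator $\sigma^A$ with
\[
\omega^{EA} \;\leqslant\; 2^{-t}\, \ident^E \otimes \sigma^A.
\]
The natural candidate for the new witness is $\sigma^{AK} := \sigma^A \otimes \frac{\ident^K}{|K|}$, which is manifestly a normalized state on $AK$. If we can show that $\omega^{EAK} \leqslant 2^{-t}|K|\,\ident^E \otimes \sigma^{AK}$, then the definition of conditional min-entropy immediately gives $H_{\infty}(E|AK)_{\omega} \geqslant t - \log|K|$, which is what we want.

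The main step is therefore to establish the operator inequality
\[
\omega^{EAK} \;\leqslant\; \omega^{EA} \otimes \ident^K,
\]
after which tensoring with the bound on $\omega^{EA}$ and rearranging the $|K|$ factor finishes the proof. To prove this inequality, I would look at the difference
\[
\omega^{EA} \otimes \ident^K - \omega^{EAK} \;=\; \sum_k p_k\, \omega_k^{EA} \otimes \bigl(\ident^K - \omega_k^K\bigr).
\]
Each $\omega_k^K$ is a normalized density operator, hence has spectrum contained in $[0,1]$, so $\ident^K - \omega_k^K \geqslant 0$. Since $\omega_k^{EA} \geqslant 0$ and $p_k \geqslant 0$, each summand is a tensor product of PSD operators and is therefore PSD; the sum of PSD operators is PSD, so the inequality holds.

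The only subtle point — and the one I expect to be the ``hard'' part of writing the proof cleanly — is keeping track of which subsystems the identity and state operators act on, and confirming that $\sigma^A \otimes \ident^K/|K|$ is a legitimate normalized operator in the sense of Definition \ref{ConMinEntropy}. Combining the two inequalities,
\[
\omega^{EAK} \;\leqslant\; 2^{-t}\,\ident^E \otimes \sigma^A \otimes \ident^K \;=\; 2^{-t}|K|\,\ident^E \otimes \left(\sigma^A \otimes \tfrac{\ident^K}{|K|}\right),
\]
and reading off the coefficient yields $H_{\infty}(E|AK)_{\omega} \geqslant -\log(2^{-t}|K|) = t - \log|K|$, as required. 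Note that the argument makes no use of $K$ being classical; it applies to any state of the block-diagonal-on-$K$ form given in the hypothesis, since only the bound $\omega_k^K \leqslant \ident^K$ is used.
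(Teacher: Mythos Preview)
Your proof is correct and follows essentially the same route as the paper: both restrict the minimizing $\sigma^{AK}$ to the product form $\sigma^A \otimes \ident^K/|K|$ and then use the operator inequality $\omega^{AEK} \leqslant \omega^{AE} \otimes \ident^K$ to reduce to the bound defining $H_{\infty}(E|A)_{\omega}$. The paper justifies that inequality by appealing to ``the classicality of $K$,'' whereas you spell it out as $\sum_k p_k\,\omega_k^{AE}\otimes(\ident^K-\omega_k^K)\geqslant 0$ and correctly note that only the separable form in the hypothesis (not classicality of $K$) is needed.
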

\begin{proof}
	\begin{align*}
		2^{-H_{\infty}(E|AK)} &= \min_{\sigma^{AE}} \min \left\{ \lambda : \lambda \ident^E \otimes \sigma^{AK} \geqslant \omega^{AEK} \right\}\\
		&\leqslant \min_{\sigma^A} \min \left\{ \lambda : \lambda \ident^E \otimes \sigma^A \otimes \frac{\ident^K}{|K|} \geqslant \omega^{AEK} \right\}\\
		&= |K| \min_{\sigma^A} \min \left\{ \lambda : \lambda \ident^{EK} \otimes \sigma^A \geqslant \omega^{AEK} \right\}\\
		&\leqslant |K| \min_{\sigma^A} \min \left\{ \lambda : \lambda \ident^{E} \otimes \sigma^A \geqslant \omega^{AE} \right\}\\
		&= |K| 2^{-H_{\infty}(E|A)}
	\end{align*}
	where the second inequality holds due to the fact that
	\begin{align*}
		&\omega^{AE} \leqslant \lambda \ident^E \otimes \sigma^A\\
		&\Rightarrow \omega^{AE} \otimes \ident^K \leqslant \lambda \ident^{EK} \otimes \sigma^A\\
		&\Rightarrow \omega^{AEK} \leqslant \lambda \ident^{EK} \otimes \sigma^A.
	\end{align*}
	The last implication is true since the classicality of $K$ ensures that $\omega^{AEK} \leqslant \omega^{AE} \otimes \ident^K$.
\end{proof}

\begin{proof}[Proof of Theorem \ref{thm:main-lower-bound}]
	Let $\mathcal{H}_{\tilde{A}} \cong \mathcal{H}_{E}$, and let $\rho^{\tilde{A}E} = \ketbra{\Phi^{+}}^{\tilde{A}E}$. Then, by the Fuchs-van de Graaf inequalities \cite{fuchs-vdG} \footnote{See also Equation 9.110 in Nielsen and Chuang}, we have that
	\begin{equation}
		F \left( \mathcal{E}(\rho), \Omega^A \otimes \frac{\ident^E}{2^n} \right)^2 \geqslant 1-\varepsilon
	\end{equation}

	Now, let $\zeta^{AEK}$ be a state such that $\tr_K[\zeta^{AEK}] = \mathcal{E}(\rho)$ and in which the $K$ register holds the key:
\[ \zeta^{AEK} = \frac{1}{|K|} \sum_k \mathcal{E}_k(\rho^{\tilde{A}E}) \otimes \ketbra{k}^K. \]
Then, by Uhlmann's theorem (\cite{uhlmann}, or see Theorem 9.4 in \cite{NC2000}),
	\begin{equation*}
		F\left( \mathcal{E}(\rho), \Omega \otimes \frac{\ident}{2^n} \right)^2 = \max_{\sigma, \tr_K[\sigma] = \Omega \otimes \frac{\ident}{2^n}} F\left( \zeta^{AEK}, \sigma^{AEK} \right)^2
	\end{equation*}
	Now, let $\sigma^{AEK}$ be a state such that $\tr_K[\sigma^{AEK}] = \Omega \otimes \frac{\ident}{2^n}$ that maximizes the above fidelity. Also, let $\omega^{AEKK'} = V \sigma^{AEK} V\mdag$ and $\xi^{AEKK'} = V\zeta^{AEK}V\mdag$, where $V^{K \rightarrow K K'} = \sum_k \ket{kk}\bra{k}$, $\mathcal{H}_K \cong \mathcal{H}_{K'}$ and $\{ \ket{k} \}_{k \in K}$ is the computational basis on $\mathcal{H}_K$. Note that this ensures that $\omega^{AEK}$ is classical on $K$. We then have:
	\begin{align*}
		F\left( \mathcal{E}(\rho), \Omega \otimes \frac{\ident}{2^n} \right)^2 &= F\left( \zeta^{AEK}, \sigma^{AEK} \right)^2\\
		&= F\left( \xi^{AEKK'}, \omega^{AEKK'} \right)^2\\
		&\leqslant F\left( \xi^{AEK}, \omega^{AEK} \right)^2\\
		&\leqslant F\left( \Phi^{\tilde{A}E}, \mathcal{D}(\omega^{AEK}) \right)^2\\
		&\leqslant \max_{\mathcal{G}^{AK \rightarrow \tilde{A}}} F\left( \Phi^{\tilde{A}E}, \mathcal{G}(\omega^{AEK}) \right)^2\\
		&= \frac{2^{-H_{\infty}(E|AK)_{\omega}}}{2^n}
	\end{align*}
	where $\mathcal{D}^{AK \rightarrow \tilde{A}}$ is a superoperator which decrypts and then forgets the key. Now, by Lemma \ref{lem:lower-bound-min-entropy-omega} above, we have that for any state $\omega^{AEK}$ such that $\tr_K[\omega^{AEK}] =  \Omega^A \otimes \frac{\ident^E}{2^n}$ that is classical on $K$, $H_{\infty}(E|AK)_{\omega} \geqslant n - \log |K|$. Hence,
	\begin{align}
		1-\varepsilon &\leqslant F\left( \mathcal{E}(\rho), \Omega \otimes \frac{\ident}{2^n} \right)^2\\
		&\leqslant \frac{2^{-n + \log |K|}}{2^n}\\
		&= |K| \cdot 2^{-2n}
	\end{align}
	and therefore $|K| \geqslant 2^{2n}(1-\varepsilon)$. Hence, $\log|K| \geqslant 2n + \log(1-\varepsilon) \geqslant 2n - 1$ if, as assumed, $\varepsilon \leqslant \frac{1}{2}$.
\end{proof}
The tighter bound of \cite{DS2004} for schemes using public coins, given there as proposition 3.8, cannot be similarly generalized.

\section{Conclusion}
We have shown how to fully generalize the notions of entropic security and entropic indistinguishability without making any assumption on the entanglement between the sender and the adversary.  Furthermore, we proved that the two approximate quantum encryption scheme presented in \cite{AS2004} are also secure in this model. Is it possible to prove a general theorem showing that every quantum encryption scheme is entropically secure? If it is true, it would require different techniques than the ones used here, since our proofs rely on the fact that the ciphers give guarantees in the 2-norm, and not only in the 1-norm as in the definition of an approximate cipher. We leave this as an open problem.

\section*{Acknowledgments}
The authors would like to thank the following people for enlightening discussions and/or useful comments on the draft of this paper: Gilles Brassard, Claude Cr\'epeau, Patrick Hayden, Debbie Leung, Jean-Raymond Simard and Adam Smith. They would also like to thank the referree for pointing out a flaw in an earlier version of this paper. The authors would also like to acknowledge QuantumWorks, CIFAR and NSERC for their support.

\bibliographystyle{IEEEtran}
\bibliography{IEEEabrv,Quantum_Entropic_Security}

\end{document}